\pgfplotsset{compat=1.18}
\newcommand{\N}{\mathbb{N}}
\newcommand{\Z}{\mathbb{Z}}
\newcommand{\R}{\mathbb{R}}
\newcommand{\Rhat}{\widehat{R}}
\DeclareMathOperator{\sign}{sign}
\DeclareMathOperator{\Mod}{mod}
\newcommand{\Signs}{\Sigma}
\newtheorem{theorem}{Theorem}
\newtheorem{corollary}[theorem]{Corollary}
\newtheorem{proposition}[theorem]{Proposition}
\theoremstyle{definition}
\newtheorem{lemma}[theorem]{Lemma}
\theoremstyle{remark}
\newtheorem{example}[theorem]{Example}
\begin{document}

\title{On the Number of Real Types of\\ Univariate Polynomials}

\author[1]{Nicolas Faroß\,\orcidlink{0009-0001-4419-2337}\,}
\author[2,3]{Thomas Sturm\,\orcidlink{0000-0002-8088-340X}\,}
\affil[1]{Department of Mathematics, Saarland University, Saarbrücken, Germany\authorcr
\href{mailto:faross@math.uni-sb.de}{faross@math.uni-sb.de}}

\affil[2]{CNRS, Inria, and the University of Lorraine, Nancy, France\authorcr
\href{mailto:thomas.sturm@cnrs.fr}{thomas.sturm@cnrs.fr}}

\affil[3]{MPI-INF and Saarland University, SIC, Saarbrücken, Germany\authorcr
\href{mailto:sturm@mpi-inf.mpg.de}{sturm@mpi-inf.mpg.de}}

\maketitle

\begin{abstract}
The real type of a finite family of univariate polynomials characterizes the
combined sign behavior of the polynomials over the real line. We derive an
explicit formula for the number of real types subject to given degree bounds.
For the special case of a single polynomial we present a closed-form
expression involving Fibonacci numbers. This allows us to precisely describe
the asymptotic growth of the number of real types as the degree increases,
in terms of the golden ratio.
\end{abstract}

\section{Introduction}

We consider univariate polynomials with real coefficients. The \emph{real
type} of such a polynomial is the finite sequence of its signs listed from
$-\infty$ to $\infty$. For instance, the polynomial $f = x^4 - 4x^2$ plotted
in Figure~\ref{fig:intro} has real roots at $-2$, $0$, $2$. Evaluation of the
signs of $f$ at and between those real roots yields its real type
\begin{equation}
    \label{eq:real_type_1}
    [1, 0, -1, 0, -1, 0, 1].
\end{equation}

The notion of a real type generalizes to finite families $f_1$,
\dots,~$f_n$ of polynomials by considering their combined signs as column
vectors of length $n$. For instance, all real roots of the polynomials of the
family $f_1 = x + 1$, $f_2 = 2x + 1$, $f_3 = x^2 - 1$ plotted in
Figure~\ref{fig:intro} are located at $-1$, $-\frac12$, $1$. Evaluation of the
signs of  $[f_1, f_2, f_3]^T$ at and between those points yields the real type
\begin{equation}
    \label{eq:real_type_m}
    \begin{bmatrix*}[r]
    -1 & 0  & 1  & 1  & 1  & \phantom{-}1 & \phantom{-}1 \\
    -1 & -1 & -1 & 0  & 1  & 1 & 1 \\
    1  & 0  & -1 & -1 & -1 & 0 & 1
    \end{bmatrix*}.
\end{equation}

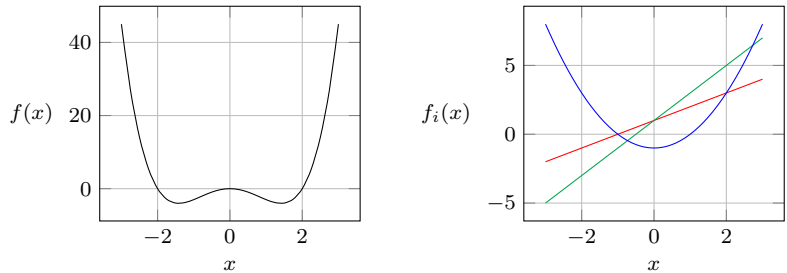
\begin{figure}
\begin{center}
  \footnotesize
  \begin{tikzpicture}
    \begin{axis}[smooth, scale=0.5, xlabel={$x$}, ylabel={$f(x)$}, ylabel style={rotate=-90}, grid]
      \addplot [domain=-3:3] { x^4 - 4*x^2 };
    \end{axis}
  \end{tikzpicture}
  \qquad
  \begin{tikzpicture}
    \begin{axis}[smooth, scale=0.5, xlabel={$x$}, ylabel={$f_i(x)$}, ylabel style={rotate=-90}, grid]
      \addplot [color=red, domain=-3:3] { x+1 };
      \addplot [color=Green, domain=-3:3] { 2*x+1 };
      \addplot [color=blue, domain=-3:3] { x^2-1 };
    \end{axis}
  \end{tikzpicture}
\end{center}
\caption{The polynomial $f =x^4 - 4x^2$ and the family $f_1 = x + 1$, $f_2 =
2x + 1$, $f_3 = x^2 - 1$ of polynomials discussed in the
Introduction\label{fig:intro}}
\end{figure}

In this paper, we are interested in the number of possible real types of
families of $n$ polynomials subject to a degree bound $d$. Note that
Eq.~\eqref{eq:real_type_1} is not a possible real type of a polynomial of degree
$3$, although it contains only three zeros. \citet[pp.11--13]{Kosta:16a} has
shown that the number of possible real types of a single polynomial of degree
$d$ is exactly
\begin{equation}
    \label{eq:kosta}
    R_d = \sum_{m=0}^d R_d^{(m)}\quad \text{with} \quad
    R_d^{(m)} = 2 \cdot \sum_{j \geq 0} \binom{m}{2m - d + 2j}
\end{equation}
and considered it unlikely that Eq.~\eqref{eq:kosta} could be expressed in a
closed form, pointing at an existing result by \citet{PWZ:96} that a sum of
the form $\sum_{j \geq l} \binom{m}{j}$ cannot be expressed in closed form as
a function of $m$ an $l$, unless one resorts to hypergeometric functions. We
are not aware of any results on the number of possible real types of families
$f_1$, \dots,~$f_n$ of polynomials for $n > 1$.

Let us make precise the notion of a closed form for our purposes here. A
\emph{closed form} represents a function using a term in arithmetic symbols
and elementary functions. Let $F_n$ denote the $n$-th Fibonacci number. Then
we allow ourselves to say that $2 F_n - 1$ is in closed form, because there
exists a closed form for $F_n$ via Binet's formula. An \emph{explicit form}
additionally admits finite sums and products with parametric upper bounds, but
not recursion.

Real types play a key role in a number of decision and quantifier elimination
procedures for real closed fields. Quantifier elimination methods based on
virtual substitution beyond degree $2$ of the quantified variables
systematically enumerate all possible real types of certain parametric
polynomials occurring in the input formula \cite{Weispfenning:97b,Sturm:18a}.
Modern implementations of this approach by Ko{\v s}ta in Redlog
\citep{Kosta:16a,DolzmannSturm:97a} and by Tonks in Maple
\citep{10.1007/978-3-030-41258-6_13,Tonks:21} have evolved from a seminal
article by \citet{Weispfenning:94a}. Virtual substitution methods are well
known for numerous applications in the sciences
\cite{DolzmannSturm:98a,Sturm:17a}. Recent application areas include chemical
reaction network theory in the sense of \citet{Feinberg:19a} in Redlog
\citep{GrigorievIosif:20a,RahkooySturm:21a} and economics in Maple
\citep{DBLP:conf/scsquare/MulliganBDET18}.

Another group of quantifier elimination procedures employs disjunctive normal
form computations and systematic case distinctions on parameters to arrive at
systems of univariate constraints such as $f_1 > 0$, $f_2 = 0$, $f_3 \leq 0$
with $f_i$ as above. The existence of a solution of the system is equivalent
to the occurrence of one of the columns $[1, 0, 0]^T$ or $[1, 0, -1]^T$ in the
real type Eq.~\eqref{eq:real_type_m}. This approach has been described by
\citet{Hormander:05a}, based on an unpublished manuscript by Paul Cohen. An
algorithm by Muchnik \citep{MICHAUXOZTURK:02a,Schoutens:04a} is closely
related. In spite of its comparatively high asymptotic complexity, the
approach has received quite some attention for its simplicity. For instance,
it was used as a basis for a fully proof-producing quantifier elimination
procedure by \citet{McLaughlinHarrison:05a}.

With both quantifier elimination approaches discussed above, closed formulas
for the number of real types help with the computation of asymptotic
complexity bounds.

We consider finite families $f_1$, \dots,~$f_n$ of polynomials, and we denote by $d_1$,
\dots,~$d_n$ their degrees and by $m$ the finite cardinality of the union of
their real roots. In terms of these notions we are going to derive here  the
following results:
\begin{enumerate}
    \item
    In the special case $n=1$, we simplify Ko\v sta's result Eq.~\eqref{eq:kosta} to
    a simple case distinction
    \begin{equation*}
        R_d = \begin{cases}
            2  F_{d+2}     & \text{if $d$ is even}, \\
            2 F_{d+2} - 2 & \text{if $d$ is odd},
            \end{cases}
    \end{equation*}
    based on Fibonacci numbers, which can be expressed in closed form, in
    contrast to sums of binomial coefficients; see Theorem~\ref{thm:T-fib}.
    \item On these grounds, we derive another closed form for the number of all real types
    realized by a single polynomial \emph{up to} degree $d$; see
    Theorem~\ref{thm:real-types-up-to-d}:
    \begin{equation*}
         \Rhat_d = 2 \cdot F_{d+3} - 2.
    \end{equation*}
    It follows that $R_d$, $\Rhat_d \in \Theta(\varphi^d)$, where $\varphi$ is
    the golden ratio; see Corollary \ref{cor:fib-asymptotic}.
    \item For the general case $n \geq 1$, we derive a
    formula for the number of all real types realized by
    polynomials with degrees $d_i$ and $m$ distinct real roots:
    \begin{equation}
    \label{eq:40inintro}
    R_{d_1, \dots, d_n}^{(m)} =
    \sum_{i=0}^m {(-1)}^i \binom{m}{i} \prod_{j=1}^n \sum_{k=0}^{m-i} \binom{m-i}{k} R^{(k)}_{d_j}
    \end{equation}
    with $R^{(k)}_{d_j}$ defined as in Eq.~\eqref{eq:kosta}. Summation over $m$
    yields an explicit form for the number  $R_{d_1,
    \dots, d_n}$ of all real types realized by  polynomials with degrees $d_i$
    and any number of roots, in analogy to Eq.~\eqref{eq:kosta}. However, compared
    to the special case $n = 1$ above, we do not obtain a closed form; see
    Theorem~\ref{thm:sign-matrix-formula}.
    \item
    For the number of all real types realized by polynomials with any choice of $d_1$, \dots,~$d_n$
    and $m$ distinct real roots, we can reduce Eq.~\eqref{eq:40inintro} to the closed form
    \begin{equation*}
        S_n^{(m)} = 2^n \cdot {(3^n - 1)}^m,
    \end{equation*}
    which generalizes the obvious equation $S_1^{(m)} = 2^{m+1}$. In
    particular, this imposes an upper bound on Eq.~\eqref{eq:40inintro}; see
    Proposition~\ref{prop:families-nice-formula}.
\end{enumerate}

Our results are applicable to the complexity analysis of algorithms involving
real types. This includes but is not limited to various approaches of real
quantifier elimination, some of which have been mentioned above.

Section~\ref{sec:preliminaries} presents relevant definitions and basic
results used throughout the article.
Section~\ref{sec:counting_real_types_of_single_polynomials} discusses the
special case of an single polynomial, and
Section~\ref{sec:counting_real_types_of_families_of_polynomials} subsequently
generalizes to families of polynomials. In a final
Section~\ref{sec:concluding_remarks_and_further_work} we finally summarize and
evaluate our results.

\section{Preliminaries}\label{sec:preliminaries}

In the following, we denote with $\Signs = \{-1, 0, 1\}$ the set of all possible signs.

\subsection{Real types of single polynomials}

Consider a non-zero univariate polynomial $f \in \R[x]$ of degree $d$ and assume $f$ has exactly $m$ real zeros
$x_1 < \cdots < x_m$. Then the real line can be partitioned into $2m + 1$ subsets 
\begin{equation*}
    \R = (-\infty, x_1) \cup \{x_1\} \cup (x_1, x_2) \cup \{x_2\} \cup \dots \cup \{x_n\} \cup (x_m, \infty) 
\end{equation*}
where the sign of $f$ is constant on each subset. The \emph{real type} of $f$ is the sign sequence $s \in {\Signs}^{2m+1}$
where $s_i \in \Signs$ is the sign of $f$ on the $i$-th subset.

\begin{example}
    The linear polynomials $2x + 1$ and $-x + 3$ have real types $[-1, 0, 1]$
    and $[1, 0, -1]$, respectively. The quadratic polynomial $x^{2} + 2x - 3$
    has real type $[1, 0, -1, 0, 1]$, and the cubic polynomial $x^3$ has again
    real type $[-1, 0, 1]$.
\end{example}

Let $s \in {\Signs}^{2m+1}$. Then wey say $s$ is a \emph{real $d$-type} if it can be realized by a polynomial of degree $d$, i.e.\ there
exists a non-zero polynomial $f \in \R[x]$ of degree $d$ with real type $s$.
For example
\begin{example}
    On the one hand, $[-1, 0, 1]$ is both a real $1$-type and a real $3$-type,
    since it can be realized by $2x + 1$ and $x^3$, respectively. On the other
    hand, $[1, 0, 1]$ is a real $2$-type, since it can be realized by $x^2$,
    but it is not a real $1$-type.
\end{example}

\citet[Section 2.1]{Kosta:16a} derived a formula for the number of real $d$-types. The main
step was to prove the following proposition, which gives a formula for the
number $R_d^{(m)}$ of real $d$-types which are realized by polynomials with exactly $m$ distinct real roots.

\begin{proposition}[Ko\v sta, 2016]\label{prop:kosta-Rdm}
\pushQED{\qed}
The number $R_d^{(m)}$ of real $d$-types realized by
polynomials with exactly $m$ distinct real roots is given by
\begin{equation*}
    R_d^{(m)} = 2 \sum_{j \geq 0} \binom{m}{2m-d+2j}.\qedhere
\end{equation*}
\popQED
\end{proposition}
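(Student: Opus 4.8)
The plan is to count, for a fixed number $m$ of distinct real roots, how many sign sequences in $\Signs^{2m+1}$ arise as the real type of some degree-$d$ polynomial. First I would observe that a real type with $m$ roots has a rigid shape: the $m$ entries at odd positions (the root positions) are all $0$, and the $m+1$ entries at even positions (the open-interval signs) are all nonzero, i.e.\ each is $\pm 1$. So a priori there are $2^{m+1}$ candidate sequences; the task is to decide which of these are realizable by a degree-$d$ polynomial and not merely by some polynomial of unconstrained degree.

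The key structural fact I would isolate is how the sign changes across a root relate to its multiplicity. At a root $x_i$ of multiplicity $\mu_i$, the sign of $f$ flips if and only if $\mu_i$ is odd, and stays the same if $\mu_i$ is even. Given a target sign pattern $(s_0, s_2, \dots, s_{2m})$ of the open intervals, let $c$ be the number of indices $i$ where consecutive signs differ (a "sign change") and $m - c$ the number where they agree. Then a degree-$d$ polynomial realizes this pattern with exactly these $m$ roots iff we can choose multiplicities $\mu_i \geq 1$ with $\mu_i$ odd at the $c$ change-positions, $\mu_i$ even at the $m-c$ no-change positions, possibly add a pair of complex-conjugate (or higher even-multiplicity) contributions, and have the total degree equal $d$ with the correct leading-coefficient sign to match $s_{2m}$. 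The minimal degree forced by the pattern is $c \cdot 1 + (m-c)\cdot 2 = 2m - c$, and any extra degree must come in steps of $2$ (bumping some multiplicity by $2$, or adding a complex pair), with one caveat about parity at $\pm\infty$. So the pattern is realizable in degree exactly $d$ iff $2m - c \leq d$ and $d \equiv 2m - c \pmod 2$, i.e.\ $d - 2m + c$ is a nonnegative even integer — equivalently $c \in \{d - 2m, d-2m+2, d-2m+4, \dots\} \cap \{0,\dots,m\}$, writing $c = 2m - d + 2j$ for $j \geq 0$.

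Then I would count: the number of sign patterns of the $m+1$ open intervals having exactly $c$ sign changes among the $m$ consecutive pairs is $2\binom{m}{c}$ — choose which $c$ of the $m$ junctions are changes, and then the whole pattern is determined by the sign $s_0$ of the leftmost interval, giving the factor $2$. Summing over the admissible $c = 2m - d + 2j$ yields
\begin{equation*}
    R_d^{(m)} = \sum_{j \geq 0} 2\binom{m}{2m - d + 2j},
\end{equation*}
which is the claimed formula. The main obstacle, and the step requiring genuine care rather than bookkeeping, is the realizability argument in the previous paragraph: I must verify that the parity/degree conditions are not only necessary but sufficient — i.e.\ that for every admissible $(c, d)$ one can actually build a polynomial with the prescribed roots, prescribed sign flips, no extra real roots, the right leading sign, and exact degree $d$. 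Concretely this means handling the distinction between raising a multiplicity by $2$ versus adjoining an irreducible quadratic factor $(x - a)^2 + b^2$, and checking that the leading-coefficient sign can always be matched to $s_{2m}$ once the number of odd-multiplicity roots is fixed (it is, since a global sign flip of $f$ is free). Once that lemma is in hand, the count is immediate.
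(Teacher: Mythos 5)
Your proposal is correct and follows essentially the same route that the paper attributes to Ko\v sta: the paper states this proposition without proof (citing Ko\v sta and Lemma~\ref{le:real-type-dep}, i.e.\ that a real type is determined by the leading sign and the parities of the multiplicities, which is exactly your sign-change/multiplicity correspondence), and then counts the admissible parity patterns for given $d$ and $m$. Your realizability argument --- minimal degree $2m-c$, extra degree in steps of $2$ via $(x^2+1)$ factors or multiplicity bumps, and the free choice of leading sign --- correctly supplies the sufficiency step that the paper leaves to the citation.
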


Note that the sum in the proposition is finite, since the binomial
coefficients vanish for $2m-d+2j > m$.

Since a polynomial of degree $d$ has at most $d$ distinct real roots, it
follows directly that the number $R_d$ of all real $d$-types is given by the
following equation \citep[Equation 2.3]{Kosta:16a}:
\begin{equation}\label{eq:kosta23}
    R_d = \sum_{m=0}^d R_d^{(m)} =  2 \cdot \sum_{m=0}^d \sum_{j \geq 0} \binom{m}{2m - d + 2j}.
\end{equation}

Table~\ref{tab:num-rt} shows the numbers of real $d$-types up to $d = 10$
\citep[Table 2.1]{Kosta:16a}. All values in the table are even, because $f
\mapsto -f$ induces a bijection between real $d$-types realized by polynomials
with a positive and with a negative leading coefficient. Note that we do not
count the zero polynomial. It is possible to define the real type of $0
\in \R[x]$ as $[0]$, which is not realized by any non-zero polynomial. This
would increase all relevant bounds in the table  by $1$.

\begin{table}
\centering
\begin{tabular}{lrrrrrrrrrrr}
\hline
$d$    & 0 & 1 & 2 & 3 & 4 & 5 & 6 & 7 & 8 & 9 & 10 \\
$T(d)$ & 2 & 2 & 6 & 8 & 16 & 24 & 42 & 66 & 110 & 176 & 288 \\
\hline\end{tabular}
\caption{Exact numbers of real $d$-types for $d = 0, \dots, 10$}\label{tab:num-rt}
\end{table}

\subsection{The structure of real types}

We present some facts about the structure of real types of single
polynomials that will be used throughout the paper.
We begin with a key lemma by Ko\v sta that gives constraints on the entries of a real type
depending on the multiplicities of the real roots of a polynomial.

\begin{lemma}[Ko\v sta, 2016; Lemma 4]\label{lem:kosta-mult}
\pushQED{\qed}
Let $[s_1, \dots, s_{2m+1}]$ be a real type of a positive-degree polynomial $f
\in \R[x]$ with $m$ distinct real roots $\beta_1 < \cdots < \beta_m$ with
multiplicities $k_1$, \dots,~$k_m$, respectively. Then for each $j \in \{1,
\dots, m\}$, we have
\begin{equation*}
    s_{2j+1} = {(-1)}^{k_j} \cdot s_{2j-1}.\qedhere
\end{equation*}
\popQED
\end{lemma}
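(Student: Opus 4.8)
The plan is to reduce the statement to a purely local analysis of the sign of $f$ near each real root $\beta_j$, using the factorization of $f$ at that root together with sign stability of a continuous function. First I would spell out how the indices of the real type match the partition of $\R$: for $j \in \{1, \dots, m\}$, the entry $s_{2j-1}$ is the constant sign of $f$ on the open interval $(\beta_{j-1}, \beta_j)$ under the convention $\beta_0 = -\infty$, the entry $s_{2j+1}$ is its constant sign on $(\beta_j, \beta_{j+1})$ under the convention $\beta_{m+1} = +\infty$, and $s_{2j} = \sign f(\beta_j) = 0$. Since $f$ is a non-zero polynomial it has only finitely many roots, so $f$ is continuous and nowhere vanishing on each of these open intervals; hence $s_{2j-1}$ and $s_{2j+1}$ are well defined elements of $\{-1, 1\}$.

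Next I would fix $j$ and write $f(x) = (x - \beta_j)^{k_j} g(x)$ with $g \in \R[x]$ and $g(\beta_j) \neq 0$; this is possible exactly because $\beta_j$ is a root of $f$ of multiplicity $k_j$. By continuity of $g$ there is $\varepsilon > 0$ with $g$ of constant sign $\sigma := \sign g(\beta_j) \in \{-1, 1\}$ on $(\beta_j - \varepsilon, \beta_j + \varepsilon)$, and after shrinking $\varepsilon$ we may assume this neighbourhood contains no further root of $f$, so it meets only the two intervals $(\beta_{j-1}, \beta_j)$ and $(\beta_j, \beta_{j+1})$. For $x \in (\beta_j - \varepsilon, \beta_j)$ we have $\sign\bigl((x-\beta_j)^{k_j}\bigr) = (-1)^{k_j}$, hence $\sign f(x) = (-1)^{k_j}\sigma$; for $x \in (\beta_j, \beta_j + \varepsilon)$ we have $\sign\bigl((x-\beta_j)^{k_j}\bigr) = 1$, hence $\sign f(x) = \sigma$. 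Therefore $s_{2j-1} = (-1)^{k_j}\sigma$ and $s_{2j+1} = \sigma$, and multiplying the first identity by $(-1)^{k_j}$ yields $s_{2j+1} = \sigma = (-1)^{k_j} s_{2j-1}$, which is the assertion. Since $j$ was arbitrary, the lemma follows.

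I do not expect a genuine obstacle here; the only point requiring care is the bookkeeping — choosing the neighbourhood of $\beta_j$ small enough that the signs read off there are literally $s_{2j-1}$ and $s_{2j+1}$, and covering the extreme cases $j = 1$ and $j = m$ uniformly through the conventions $\beta_0 = -\infty$ and $\beta_{m+1} = +\infty$. The mathematical content is just the combination of the factorization at a root with the fact that a continuous function keeps its sign on a neighbourhood of a point where it is non-zero.
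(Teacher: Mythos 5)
Your proof is correct. Note that the paper itself does not prove this lemma --- it is quoted from Ko\v sta (2016, Lemma~4) and stated with the QED symbol attached, i.e.\ as an external result --- so there is no in-paper argument to compare against. Your local analysis (factor out $(x-\beta_j)^{k_j}$, use that $g$ keeps the sign $\sigma = \sign g(\beta_j)$ on a small punctured neighbourhood containing no other root, read off $s_{2j-1} = (-1)^{k_j}\sigma$ and $s_{2j+1} = \sigma$) is the standard argument and is exactly what one expects behind the cited lemma; the index bookkeeping and the conventions $\beta_0 = -\infty$, $\beta_{m+1} = +\infty$ are handled correctly.
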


As a consequence, the real type of a polynomial depends only on the sign of
its leading coefficient and the parity of the multiplicities of its real roots.

\begin{lemma}\label{le:real-type-dep}
Let $f \in \R[x]$ be a non-zero polynomial with leading coefficient $\alpha$
and $m$ real roots $\beta_1 < \cdots < \beta_m$ with multiplicities
$k_1$, \dots,~$k_m$, respectively. Then the real type of $f$ depends only on
$\sign(\alpha)$ and $k_i \Mod 2$ for $1 \leq i \leq m$.
\end{lemma}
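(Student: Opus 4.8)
The plan is to deduce Lemma~\ref{le:real-type-dep} directly from Ko\v sta's Lemma~\ref{lem:kosta-mult}, by showing that the sign sequence $[s_1,\dots,s_{2m+1}]$ is completely determined once we fix $\sign(\alpha)$ and the parities $k_i \bmod 2$. The key observation is that the entries of a real type fall into two interleaved groups: the ``root entries'' $s_{2j}$ for $1\le j\le m$, which are all equal to $0$ since $f$ vanishes at each $\beta_j$, and the ``interval entries'' $s_{2j-1}$ for $1\le j\le m+1$, which are the signs of $f$ on the open intervals between consecutive roots (with the two unbounded intervals at the ends). So it suffices to show the interval entries are determined by the stated data.

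First I would handle $s_{2m+1}$, the sign of $f$ on $(\beta_m,\infty)$: for large $x$ the sign of $f(x)$ agrees with $\sign(\alpha)$, so $s_{2m+1}=\sign(\alpha)$. Then I would run a downward induction on $j$ from $m+1$ to $1$: given $s_{2j+1}$, Lemma~\ref{lem:kosta-mult} gives $s_{2j-1} = (-1)^{k_j} s_{2j+1}$, and since $(-1)^{k_j}$ depends only on $k_j \bmod 2$, the entry $s_{2j-1}$ is determined by $s_{2j+1}$ together with $k_j\bmod 2$. Unwinding the recursion yields the explicit closed form
\begin{equation*}
    s_{2j-1} = \sign(\alpha)\cdot (-1)^{k_j + k_{j+1} + \dots + k_m}
    \qquad (1 \le j \le m+1),
\end{equation*}
where the empty product convention handles $j=m+1$. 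Combined with $s_{2j}=0$ for $1\le j\le m$, this exhibits every entry of the real type as a function solely of $\sign(\alpha)$ and the parities $k_i\bmod 2$, which is exactly the claim. For the degenerate case $m=0$ the real type is just $[\sign(\alpha)]$, consistent with the formula.

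There is essentially no hard obstacle here; the lemma is a bookkeeping consequence of Lemma~\ref{lem:kosta-mult}. The only point requiring a little care is that Lemma~\ref{lem:kosta-mult} as quoted is stated for positive-degree polynomials, so one should separately note the trivial constant case (a non-zero constant $f=\alpha$ has $m=0$ and real type $[\sign(\alpha)]$), and one should make sure the anchor $s_{2m+1}=\sign(\alpha)$ is justified from the asymptotic behavior of $f$ rather than from Lemma~\ref{lem:kosta-mult}. Beyond that, I would keep the proof to a few sentences, since the interleaving structure and the downward recursion do all the work.
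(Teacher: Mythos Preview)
Your proof is correct and follows essentially the same approach as the paper: both observe that the even-indexed entries $s_{2j}$ vanish and then use Lemma~\ref{lem:kosta-mult} to propagate through the odd-indexed entries from a single anchored endpoint determined by the asymptotic sign of $f$. The only difference is that the paper anchors at $s_1$ via the sign at $-\infty$, which requires factoring out the real roots and noting that the remaining factor has even degree, whereas your anchor at $s_{2m+1}=\sign(\alpha)$ from the sign at $+\infty$ is more direct and avoids that extra step.
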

\begin{proof}
    Denote with $s = (s_1, \dots, s_{2m+1}) \in \Signs^{2m+1}$ the real type of $f$.
By assumption, we can factor $f$ as
\begin{equation*}
    f(x) = {(x - \beta_1)}^{k_1} \cdots {(x - \beta_m)}^{k_m} \cdot g(x)
\end{equation*}
for a polynomial $g \in \R[x]$ with $g(x) \neq 0$ for all $x \in \R$. Then
\begin{align*}
s_1 = \lim_{x \to -\infty} \sign f(x) &= \sign(\alpha) \cdot {(-1)}^{\deg f} \\ &= \sign(\alpha) \cdot {(-1)}^{\deg f \Mod 2}
\end{align*}
where $\deg f = \sum_{i=1}^m k_i + \deg g$. Since $g$ has no real roots, the intermediate value theorem implies that $\deg g$ is even.
Hence,
\begin{equation*}
    s_1 = \sign(\alpha) \cdot {(-1)}^{\sum_{i=1}^m k_i \Mod 2}
\end{equation*}
which depends only on $\sign(\alpha)$ and $k_i \Mod 2$.
Lemma~\ref{lem:kosta-mult} then implies that
\begin{equation*}
    s_{2j+1} = \sign(\alpha) \cdot {(-1)}^{\sum_{i=j+1}^m k_i \Mod 2}
\end{equation*}
for all $1 \leq j \leq m$. Additionally, we always have $s_{2j} = 0$ for all $1 \leq j \leq m$.
Thus, the real type $s$ depends only on $\sign(\alpha)$ and $k_i \Mod 2$.
\end{proof}

Using this lemma, Ko\v sta arrives at Proposition~\ref{prop:kosta-Rdm} by
counting all possible parities of multiplicities for a given degree $d$ and a
given number of real roots $m$. Moreover, it follows that a sign sequence is a real
type if and only if it is a real type of a polynomial with integer
coefficients. This is important for algorithmic purposes, where input over a
finite alphabet is required.

\begin{proposition}\label{prop:z-and-r}
A sign sequence in $\Signs^{2m+1}$ is the real type of a polynomial in $\R[x]$
if and only if it is the real type of a polynomial in $\Z[x]$.
\end{proposition}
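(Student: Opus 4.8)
The plan is to reduce both directions to Lemma~\ref{le:real-type-dep}. One direction is immediate: since $\Z[x] \subseteq \R[x]$, any sign sequence that is the real type of a polynomial in $\Z[x]$ is, in particular, the real type of a polynomial in $\R[x]$.

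For the converse, suppose $s \in \Signs^{2m+1}$ is the real type of some non-zero $f \in \R[x]$ with leading coefficient $\alpha$ and distinct real roots $\beta_1 < \cdots < \beta_m$ of multiplicities $k_1, \dots, k_m$. By Lemma~\ref{le:real-type-dep}, the real type of a non-zero polynomial is determined by the sign of its leading coefficient together with the parities of the multiplicities of its real roots, listed in order. Hence it suffices to exhibit a single polynomial in $\Z[x]$ carrying the same data, and I would simply take
\[
    g(x) = \sign(\alpha) \cdot \prod_{i=1}^m {(x - i)}^{k_i}.
\]
This $g$ has integer coefficients, is non-zero, has leading coefficient $\sign(\alpha) \in \{-1, 1\}$, and has precisely the $m$ distinct real roots $1 < 2 < \cdots < m$ with multiplicities $k_1, \dots, k_m$, matching those of $f$ index by index. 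Lemma~\ref{le:real-type-dep} then gives that $g$ has real type $s$. The degenerate case $m = 0$ needs no separate treatment: there the empty product makes $g$ the constant $\sign(\alpha)$, and by the same lemma a root-free polynomial has real type $[\sign(\alpha)]$, which equals $s$.

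I do not expect a genuine obstacle: the mathematical content sits entirely in Lemma~\ref{le:real-type-dep}, which already says that the real type ignores everything except the sign of the leading coefficient and the parities of the root multiplicities. The only care needed is bookkeeping — checking that the constructed $g$ is non-zero (so that it is admissible in the sense of the paper, which excludes the zero polynomial), that its roots appear in increasing order so the parities align with those of $f$ in the correct positions, and that the $m = 0$ boundary case is accounted for.
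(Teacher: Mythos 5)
Your proof is correct and follows essentially the same route as the paper: the paper also constructs $g(x) = \sign(\alpha)\cdot(x-1)^{k_1}\cdots(x-m)^{k_m}$ and invokes Lemma~\ref{le:real-type-dep}, with the converse direction given by $\Z[x] \subseteq \R[x]$. Your explicit treatment of the $m=0$ case is a minor addition the paper leaves implicit.
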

\begin{proof}
Let $s \in \Signs^{2m+1}$ be the real type of a polynomial $f \in \R[x]$.
Assume that $f$ has leading coefficient $\alpha$ and $m$ real roots
$\beta_1 < \cdots < \beta_m$ with multiplicities $k_1$, \dots,~$k_m$,
respectively. Define the polynomial
\begin{equation*}
    g(x) = \sign(\alpha) \cdot {(x - 1)}^{k_1} \cdots {(x - m)}^{k_m} \in \Z[x].
\end{equation*}
Then Lemma~\ref{le:real-type-dep} implies that $g$ has also real type $s$,
since the leading coefficients of $f$ and $g$ have the same sign and the
multiplicities of both polynomials agree. The converse follows from $\Z[x]
\subseteq \R[x]$.
\end{proof}

\subsection{Real types of families of polynomials}\label{sec:comb-sign-matrix}

We generalize our notion of real types to families of polynomials.
Let $f_1$, \dots,~$f_n \in \R[x]$ be non-zero polynomials and consider their
combined roots
\begin{equation*}
    V = \{ x \in \R \mid \text{$f_i(x) = 0$ for some $1 \leq i \leq n$} \}.
\end{equation*}
Since each polynomial has a finite number of roots, $V$ is again finite and consists
of $m$ points $x_1 < \cdots < x_m$. As in the case of a single polynomial, these points partition the real line into $2m + 1$ subsets
\begin{equation*}
    \R = (-\infty, x_1) \cup \{x_1\} \cup (x_1, x_2) \cup \{x_2\} \cup \dots \cup \{x_n\} \cup (x_m, \infty),
\end{equation*}
where the sign of each polynomial $f_i$ is constant on each subset.
Denote with $\sigma_{i,j} \in \Signs$ the sign of $f_i$ on the $j$-th subset.
Then the \emph{real type} of $f_1$, \dots,~$f_n$ is given by
the sign matrix
\begin{equation*}
\begin{bmatrix}
    \sigma_{1,1} & \hdots & \sigma_{1,2m+1} \\
    \vdots & \ddots & \vdots \\
    \sigma_{n,1} & \hdots & \sigma_{n,2m+1}
\end{bmatrix}
\in \Signs^{n \times (2m+1)}.
\end{equation*}

It follows that every odd column of a real type contains exclusively $-1$ or $1$, and
every even column contains at least one zero. Furthermore, the $i$-th row is given
by the real type of $f_i$, possibly with some entries repeated.

It is not hard to see that Proposition~\ref{prop:z-and-r} generalizes to
families of polynomials and sign matrices.

\section{Counting real types of single polynomials}\label{sec:counting_real_types_of_single_polynomials}
We want to improve the explicit form given in Eq.~\eqref{eq:kosta23} to a closed form, which
will be based on \emph{Fibonacci numbers} $F_n$. The Fibonacci sequence is recursively defined
as $F_1 = 1$, $F_2 = 1$ and $F_n = F_{n-1} + F_{n-2}$ for $n \geq 3$. It starts with
\begin{equation}\label{eq:fib-seq}
    1,\quad 1,\quad 2,\quad 3,\quad 5,\quad 8,\quad 13,\quad 21,\quad \dots
\end{equation}

Before we can prove our first theorem, we first need the following two lemmas,
which establish a recursive relation and base cases for the number $R_d^{(m)}$
of real $d$-types realized by polynomials with exactly $m$ distinct
real roots.

\begin{lemma}\label{lem:R-even-odd}
Let $d \geq 0$. Then
\begin{equation*}
    R_d^{(0)} = \begin{cases}
        2 & \text{if $d$ is even}, \\
        0 & \text{if $d$ is odd}.
    \end{cases}
\end{equation*}
\end{lemma}
\begin{proof}
First, assume that the degree $d$ is even. Then, the intermediate value theorem implies that
any real $d$-type without zeros is of the form $(1, \dots, 1)$ or $(-1, \dots, -1)$.
Both real types can for example be realized by $x^d + 1$ and $-x^d - 1$, respectively,
which shows $R_d^{(0)} = 2$.
Next, assume that $d$ is odd. By the intermediate value theorem, every polynomial of odd degree has at least one
real zero. Thus, $R_d^{(0)} = 0$ in this case.
\end{proof}

\begin{lemma}\label{lem:R-rec}
Let $d \geq 2$ and $m \geq 1$. Then
\begin{equation*}
    R_d^{(m)} = R_{d-2}^{(m-1)} + R_{d-1}^{(m-1)}.
\end{equation*}
\end{lemma}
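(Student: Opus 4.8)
The plan is to prove the recurrence by a direct bijective/combinatorial argument using the characterization of real types from Lemma~\ref{le:real-type-dep}. By that lemma, a real $d$-type with exactly $m$ distinct real roots is determined by the sign $\varepsilon = \sign(\alpha) \in \{-1,1\}$ of the leading coefficient together with the parity vector $(p_1, \dots, p_m) \in \{0,1\}^m$ of the multiplicities $k_1, \dots, k_m$ of the roots, where $p_i = k_i \bmod 2$. Conversely, given $\varepsilon$ and a parity vector, such a real type is realized precisely when there exist multiplicities $k_i \geq 1$ with $k_i \equiv p_i \pmod 2$ and $\sum_{i=1}^m k_i + e = d$ for some even $e \geq 0$ (the contribution $\deg g$ of the rootless factor). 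So counting real $d$-types with $m$ roots amounts to: count pairs $(\varepsilon, (p_1,\dots,p_m))$ for which $\sum p_i \leq d - m$ (so that the remaining degree $d - m - \sum p_i$ can be distributed as even bumps to the $k_i$ plus an even $\deg g$) and $d - m - \sum p_i$ is even, i.e.\ $\sum p_i \equiv d - m \pmod 2$. The factor of $2$ everywhere is the $\varepsilon$ choice, consistent with $R_d^{(m)} = 2\sum_{j\ge 0}\binom{m}{2m-d+2j}$: writing $w = \sum p_i$, the number of parity vectors of weight $w$ is $\binom{m}{w}$, and the admissible $w$ are those with $w \equiv d-m \pmod 2$ and $w \le d-m$, which matches setting $w = 2m - d + 2j$ when one reindexes (note $w \le m$ automatically).

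**With that reformulation**, the recurrence becomes a statement purely about these binomial sums, and I would prove it by splitting on whether the $m$-th root has even or odd multiplicity parity $p_m$. If $p_m = 0$: the root $\beta_m$ contributes an even multiplicity $\ge 2$, so removing $2$ from its multiplicity gives a valid configuration for degree $d-2$ with $m$ roots still present and the same parity vector — but this changes $m$, not what we want. Instead, the cleaner split is on $p_m$ directly at the level of the parity vector: real $d$-types with $m$ roots split into those with $p_m = 1$ and those with $p_m = 0$. For $p_m = 0$, the minimum degree consumed by root $\beta_m$ is $2$; deleting a full even block from $\beta_m$'s multiplicity (reducing it from $k_m$ to $k_m - 2 \ge 0$; if it drops to $0$ the root disappears) — this is where the bookkeeping gets delicate. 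Rather than chase multiplicities, I would work directly with the admissible-weight description: let $A_d^{(m)} = \{ w \in \{0,\dots,m\} : w \equiv d-m \ (\mathrm{mod}\ 2),\ w \le d-m \}$, so $R_d^{(m)} = 2\sum_{w \in A_d^{(m)}} \binom{m}{w}$. Using $\binom{m}{w} = \binom{m-1}{w} + \binom{m-1}{w-1}$, I split the sum, and then I must check that $\{w : \binom{m-1}{w} \text{ appears}\}$ reindexes to $A_{d-1}^{(m-1)}$ and $\{w : \binom{m-1}{w-1} \text{ appears}\}$ (i.e.\ $w-1$ ranges over...) reindexes to $A_{d-2}^{(m-1)}$. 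Concretely, $w \equiv d-m \pmod 2$ and $w \le d-m$ with $w \le m-1$ gives $A_{d-1}^{(m-1)}$ for the $\binom{m-1}{w}$ part (since $d-m = (d-1)-(m-1)$), and shifting $w \mapsto w-1$ in the second part turns the conditions $w \equiv d-m$, $w \le d-m$ into $w-1 \equiv d-m-1 = (d-2)-(m-1) \pmod 2$ and $w - 1 \le d - m - 1 = (d-2)-(m-1)$, which is exactly $A_{d-2}^{(m-1)}$.

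**Alternatively and more slickly**, I would avoid the index juggling entirely and give the bijective proof at the level of polynomials/configurations: a real $d$-type with $m \ge 1$ roots and $d \ge 2$ is encoded by $(\varepsilon, p_1,\dots,p_m)$ with the admissibility constraint above; the last root's parity $p_m$ is either $0$ or $1$. I claim real $d$-types with $m$ roots and $p_m = 1$ are in bijection with real $(d-1)$-types with $m-1$ roots (delete the last root, which consumed an odd degree $\ge 1$; the leftover degree parity and bound transform correctly since we remove exactly $1$ plus an even amount), and real $d$-types with $m$ roots and $p_m = 0$ are in bijection with real $(d-2)$-types with $m-1$ roots (delete the last root, which consumed an even degree $\ge 2$; remove exactly $2$ plus an even amount). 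Summing the two cases gives $R_d^{(m)} = R_{d-1}^{(m-1)} + R_{d-2}^{(m-1)}$.

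**The main obstacle** I anticipate is making the "delete the last root" bijections genuinely well-defined in both directions, i.e.\ verifying the admissibility constraint (existence of valid multiplicities with the right degree sum, including the even rootless factor) is preserved. Going forward is easy; going backward requires that from a real $(d-1)$-type with $m-1$ roots one can always reinstate an $m$-th root with odd multiplicity — this works because we may always add a simple root ($k_m = 1$), and the parity/degree arithmetic checks out; similarly, from a $(d-2)$-type with $m-1$ roots we add a double root ($k_m = 2$). The one subtlety is the edge behaviour when $m = 1$: then "$m-1 = 0$ roots" refers to $R_{d-1}^{(0)}$ and $R_{d-2}^{(0)}$, and one should confirm via Lemma~\ref{lem:R-even-odd} that the parities line up — if $d$ is even, $R_d^{(1)}$ should equal $R_{d-1}^{(0)} + R_{d-2}^{(0)} = 0 + 2 = 2$, and if $d$ is odd it should equal $2 + 0 = 2$; one checks $R_d^{(1)} = 2$ for all $d \ge 2$ directly (one root of any admissible parity, with the degree bound always satisfiable), confirming the base of the induction that would otherwise be hidden in the recurrence.
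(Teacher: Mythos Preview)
Your proposal is essentially correct and contains two distinct arguments, but there is one slip worth flagging. In your reformulation you claim that the admissible weights $w=\sum_i p_i$ are characterised by $w\le d-m$ and $w\equiv d-m\pmod 2$. That is not right: since each odd-parity root costs at least $1$ and each even-parity root at least $2$, the minimal degree consumed by the roots is $w+2(m-w)=2m-w$, so the correct conditions are $w\ge 2m-d$ and $w\equiv d\pmod 2$. Your set is the image of the correct one under $w\mapsto m-w$, so the binomial sums coincide and nothing downstream breaks; but the sentence ``the admissible $w$ are those with $\sum p_i\le d-m$'' is false as stated.

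Your Pascal-rule argument, once the index set is fixed, is exactly the paper's proof: the paper simply applies $\binom{m}{k}=\binom{m-1}{k}+\binom{m-1}{k-1}$ termwise to Ko\v sta's formula $R_d^{(m)}=2\sum_{j\ge 0}\binom{m}{2m-d+2j}$, rewrites $2m-d+2j$ as $2(m-1)-(d-2)+2j$ and $2m-d+2j-1$ as $2(m-1)-(d-1)+2j$, and reads off $R_{d-2}^{(m-1)}+R_{d-1}^{(m-1)}$. No combinatorial reinterpretation is needed, and there are no boundary issues because the sums run over all $j\ge 0$ with vanishing binomials outside range.

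Your bijective argument is a genuinely different route and it is clean: splitting on $p_m$, the map $(\varepsilon,p_1,\dots,p_m)\mapsto(\varepsilon,p_1,\dots,p_{m-1})$ gives a bijection from $d$-types with $m$ roots and $p_m=1$ to $(d-1)$-types with $m-1$ roots (reinstating a simple root for the inverse), and likewise from those with $p_m=0$ to $(d-2)$-types with $m-1$ roots (reinstating a double root). This avoids Ko\v sta's formula entirely and explains the recurrence directly at the level of Lemma~\ref{le:real-type-dep}, whereas the paper's proof is shorter but relies on already having the binomial expression in hand. Your closing sanity check of the $m=1$ case is correct but unnecessary for the proof itself.
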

\begin{proof}
Let $j \geq 0$. Then the relation $\binom{n}{k} = \binom{n-1}{k} + \binom{n-1}{k-1}$
for all $n \geq 1$ and $k \in \mathbb{Z}$ yields
\begin{align*}
    \binom{n}{2n-d+2j}
    & = \binom{n-1}{2n-d+2j} + \binom{n-1}{2n-d+2j-1}\\
    & = \binom{n-1}{2(n-1)-(d-2)+2j} + \binom{n-1}{2(n-1)-(d-1)+2j}.
\end{align*}
Hence, by summing over all $j \geq 0$ on both sides and using Proposition~\ref{prop:kosta-Rdm}, we obtain
\begin{equation*}
    R_d^{(m)} = R_{d-2}^{(m-1)} + R_{d-1}^{(m-1)}.\qedhere
\end{equation*}
\end{proof}

Using the previous two lemmas, we can show by induction that the number of
real $d$-types is given by a case distinction of closed forms involving
Fibonacci numbers.

\begin{theorem}\label{thm:T-fib}
The number $R_d$ of real $d$-types is
\begin{equation*}
    R_d = \begin{cases}
        2 F_{d+2}     & \text{if $d$ is even}, \\
        2 F_{d+2} - 2 & \text{if $d$ is odd}.
    \end{cases}
\end{equation*}
\end{theorem}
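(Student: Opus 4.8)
The plan is to collapse the double sum in Eq.~\eqref{eq:kosta23} into a linear recurrence in $d$ alone and then verify the claimed closed form by a short induction. First I would derive a recurrence for $R_d$ itself. For $d \geq 2$ I split off the term $m = 0$ and apply Lemma~\ref{lem:R-rec} to the remaining terms:
\begin{align*}
    R_d &= R_d^{(0)} + \sum_{m=1}^d R_d^{(m)} = R_d^{(0)} + \sum_{m=1}^d \bigl( R_{d-2}^{(m-1)} + R_{d-1}^{(m-1)} \bigr) \\
        &= R_d^{(0)} + \sum_{m=0}^{d-1} R_{d-2}^{(m)} + \sum_{m=0}^{d-1} R_{d-1}^{(m)}.
\end{align*}
Since a polynomial of degree $d-2$ has at most $d-2$ distinct real roots, $R_{d-2}^{(d-1)} = 0$, so the first sum equals $R_{d-2}$ and the second is exactly $R_{d-1}$. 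Hence $R_d = R_d^{(0)} + R_{d-1} + R_{d-2}$ for $d \geq 2$. Substituting the values of $R_d^{(0)}$ from Lemma~\ref{lem:R-even-odd} turns this into the parity-dependent recurrence $R_d = R_{d-1} + R_{d-2} + 2$ when $d$ is even and $R_d = R_{d-1} + R_{d-2}$ when $d$ is odd.

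Next I would pin down the base cases $R_0 = 2$ and $R_1 = 2$, either by reading them off Table~\ref{tab:num-rt} or directly: the only real $0$-types are $[1]$ and $[-1]$, and the only real $1$-types are $[-1,0,1]$ and $[1,0,-1]$. Then I would prove the formula by induction on $d$. The base cases match, since $2F_2 = 2$ and $2F_3 - 2 = 2$. For the inductive step with $d \geq 2$, using $F_{d+2} = F_{d+1} + F_d$: if $d$ is even then $d-1$ is odd and $d-2$ is even, so $R_d = R_{d-1} + R_{d-2} + 2 = (2F_{d+1} - 2) + 2F_d + 2 = 2F_{d+2}$; if $d$ is odd then $d-1$ is even and $d-2$ is odd, so $R_d = R_{d-1} + R_{d-2} = 2F_{d+1} + (2F_d - 2) = 2F_{d+2} - 2$. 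This closes the induction.

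I do not expect a serious obstacle here. The only point requiring care is the index shift in the first display: one must justify that the $m = d-1$ term of $\sum_m R_{d-2}^{(m)}$ vanishes, so that the shifted sum genuinely equals $R_{d-2}$ rather than a proper truncation of it, and likewise check that Lemma~\ref{lem:R-rec} is being applied only in its stated range $d \geq 2$, $m \geq 1$. Once the recurrence $R_d = R_d^{(0)} + R_{d-1} + R_{d-2}$ is in hand, the rest is a routine two-step induction.
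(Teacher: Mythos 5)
Your proposal is correct and follows essentially the same route as the paper's proof: split off the $m=0$ term, apply Lemma~\ref{lem:R-rec} with an index shift, observe $R_{d-2}^{(d-1)}=0$ to obtain $R_d = R_d^{(0)} + R_{d-1} + R_{d-2}$, and close with a two-step induction on the parity cases. The points you flag as requiring care (the vanishing term and the applicability range of the lemma) are exactly the ones the paper also addresses.
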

\begin{proof}
We prove the statement by induction on $d$. The cases $d = 0$ and $d = 1$ can
directly be checked using Table~\ref{tab:num-rt} and Eq.~\eqref{eq:fib-seq}.
Now, let $d \geq 2$ and assume the statement holds for $d-1$ and $d-2$. Then
Lemma~\ref{lem:R-rec} yields
\begin{equation*}
    R_d = \sum_{m=0}^d R_{d}^{(m)} =
    R_{d}^{(0)} + \sum_{m=1}^{d} \big(R_{d-1}^{(m-1)} + R_{d-2}^{(m-1)}\big).
\end{equation*}
By splitting up the sum and performing an index shift, we obtain
\begin{equation*}
    R_d = R_{d}^{(0)} + \sum_{m=0}^{d-1} R_{d-1}^{(m)} + \sum_{m=0}^{d-2} R_{d-2}^{(m)} + R_{d-2}^{(d-1)}.
\end{equation*}
Furthermore, a polynomial of degree $d-2$ has at most $d-2$ distinct real roots, which implies $R_{d-2}^{(d-1)} = 0$.
Thus,
\begin{equation*}
    R_d = R_{d}^{(n)} + R_{d-1} + R_{d-2}.
\end{equation*}
If $d$ is even, Lemma~\ref{lem:R-even-odd} and the induction hypothesis yield
\begin{equation*}
    R_d = 2 + (2F_{d+1} - 2) + 2F_{d} = 2F_{d+2}.
\end{equation*}
Otherwise, $d$ is odd and
\begin{equation*}
    R_d = 2F_{d+1} + (2F_{d} - 2) = 2F_{d+2} - 2.\qedhere
\end{equation*}
\end{proof}

Next, we extend the previous theorem to counting the number of real types for
polynomials up to degree $d$ instead of exactly degree $d$. The following two
lemmas show that this number is given by $R_d + R_{d-1}$. Note that we do not
sum over all smaller degrees.

\begin{lemma}\label{le:inlcusion-plus-2}
If $s$ is a real $d$-type, then $s$ is also a real $(d+2)$-type.
\end{lemma}
\begin{proof}
Let $s$ be a real $d$-type. Then $s$ is realized by a polynomial $f \in \R[x]$
of degree $d$. Define the polynomial $g = f \cdot (x^2 + 1)$.
For all $x \in \R$, we have $x^2 + 1 > 0$, and thus
\begin{equation*}
    \sign g(x) = \sign f(x).
\end{equation*}
Thus $s$ is also realized by the polynomial $g$, which has degree $d+2$.
\end{proof}

\begin{lemma}\label{le:even-odd-disjoint}
The sets of real types realized by polynomials with an even degree and with an odd degree
are disjoint.
\end{lemma}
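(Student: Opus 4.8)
The plan is to extract from a real type $s = [s_1, \dots, s_{2m+1}]$ a quantity that determines the parity of the degree of any polynomial realizing it. Recall from the proof of Lemma~\ref{le:real-type-dep} that if $f \in \R[x]$ has leading coefficient $\alpha$ and degree $d$, then
\begin{equation*}
    s_1 = \lim_{x \to -\infty} \sign f(x) = \sign(\alpha) \cdot {(-1)}^d,
    \qquad
    s_{2m+1} = \lim_{x \to +\infty} \sign f(x) = \sign(\alpha).
\end{equation*}
Both $s_1$ and $s_{2m+1}$ lie in $\{-1, 1\}$, since odd-indexed entries of a real type are never $0$. Multiplying the two equations gives $s_1 \cdot s_{2m+1} = {(-1)}^d$, so the parity of $d$ is a function of $s$ alone: we have $s_1 = s_{2m+1}$ if $d$ is even, and $s_1 = -s_{2m+1}$ if $d$ is odd, and exactly one of these two cases occurs.

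Consequently, if a sign sequence $s$ were simultaneously the real type of a polynomial of even degree and of a polynomial of odd degree, we would obtain both $s_1 = s_{2m+1}$ and $s_1 = -s_{2m+1}$, forcing $s_1 = 0$, which contradicts $s_1 \in \{-1, 1\}$. Hence the two sets of real types are disjoint. (The degenerate case $m = 0$, where $s = [s_1]$ and $s_1 = s_{2m+1}$, is consistent with this: such a type is realizable only in even degree, matching the fact that an odd-degree polynomial always has a real root.)

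There is no real obstacle here; the entire argument rests on the behavior of the sign of a polynomial at $\pm\infty$, already established in the proof of Lemma~\ref{le:real-type-dep} via the intermediate value theorem. The only point requiring a word of care is to note explicitly that $s_1, s_{2m+1} \neq 0$, which is guaranteed because they are odd-indexed entries of a real type.
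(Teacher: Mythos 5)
Your proof is correct and follows essentially the same route as the paper's: comparing $\lim_{x\to-\infty}\sign f(x)$ with $\lim_{x\to\infty}\sign f(x)$ to show that $s_1 = s_{2m+1}$ for even degree and $s_1 = -s_{2m+1}$ for odd degree, which cannot both hold since $s_1 \neq 0$. Your explicit remark that the first and last entries are nonzero is a small but welcome addition that the paper leaves implicit.
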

\begin{proof}
Let $s \in \Signs^n$ be a sign sequence which is realized by a polynomial $f \in \R[x]$ of degree $d$.
Assume $d$ is even. Then
\begin{equation*}
    \lim_{x \to -\infty}\sign f(x) = \lim_{x \to \infty}\sign f(x),
\end{equation*}
which implies $s_1 = s_n$. On the other hand, if $d$ is odd, then
\begin{equation*}
    \lim_{x \to -\infty}\sign f(x) = -\lim_{x \to \infty}\sign f(x),
\end{equation*}
which implies $s_1 = -s_n$. Thus, a sign sequence $s$ can not be realized by a polynomial of even degree and
by a polynomial of odd degree at the same time.
\end{proof}

After these preparations, we can state our theorem.

\begin{theorem}\label{thm:real-types-up-to-d}
Then the number $\Rhat_d$ of real types realized by polynomials up to
degree $d$ is $\Rhat_d = 2 F_{d+3} - 2$.
\end{theorem}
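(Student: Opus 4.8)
The plan is to first establish the identity $\widehat{R}_d = R_d + R_{d-1}$ that the surrounding text attributes to the two preceding lemmas, and then to evaluate this sum using Theorem~\ref{thm:T-fib} together with the Fibonacci recurrence.

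For the identity, I would argue as follows. Write $T_k \subseteq \bigcup_{m} \Signs^{2m+1}$ for the set of sign sequences realized by polynomials of degree exactly $k$, so that the set of real types realized by polynomials up to degree $d$ is $\bigcup_{k=0}^d T_k$ and $\widehat{R}_d = \bigl|\bigcup_{k=0}^d T_k\bigr|$. By Lemma~\ref{le:inlcusion-plus-2} we have $T_k \subseteq T_{k+2}$, so within a fixed parity class the sets $T_k$ are nested and the union collapses: the union of all $T_k$ with $k \le d$ and $k \equiv d \pmod 2$ is exactly $T_d$, and the union of all $T_k$ with $k \le d$ and $k \equiv d-1 \pmod 2$ is exactly $T_{d-1}$ (with the convention $T_{-1} = \emptyset$ when $d = 0$). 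By Lemma~\ref{le:even-odd-disjoint} these two parity classes are disjoint, hence $\widehat{R}_d = |T_d| + |T_{d-1}| = R_d + R_{d-1}$.

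For the evaluation I split on the parity of $d$ and invoke Theorem~\ref{thm:T-fib}. If $d$ is even, then $R_d = 2F_{d+2}$ and $R_{d-1} = 2F_{d+1} - 2$, so $\widehat{R}_d = 2F_{d+2} + 2F_{d+1} - 2 = 2F_{d+3} - 2$ by $F_{d+3} = F_{d+2} + F_{d+1}$; if $d$ is odd, then $R_d = 2F_{d+2} - 2$ and $R_{d-1} = 2F_{d+1}$, giving the same value. The case $d = 0$ is checked directly, since it involves the convention $R_{-1} = 0$: here $\widehat{R}_0 = R_0 = 2 = 2F_3 - 2$.

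The only real subtlety — the main obstacle, such as it is — lies in making the collapsing-union step precise: Lemma~\ref{le:inlcusion-plus-2} only permits raising the degree in steps of two, so a real $k$-type with $k \le d$ lands in $T_d$ or in $T_{d-1}$ according to the parity of $d - k$, and it is exactly Lemma~\ref{le:even-odd-disjoint} that guarantees no sign sequence is counted in both classes. Once this partition is in place, the remainder is the two-line Fibonacci computation above.
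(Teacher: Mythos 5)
Your proposal is correct and follows the same route as the paper: establish $\Rhat_d = R_d + R_{d-1}$ from Lemma~\ref{le:inlcusion-plus-2} and Lemma~\ref{le:even-odd-disjoint}, then apply Theorem~\ref{thm:T-fib} and the Fibonacci recurrence, with $d=0$ checked separately. Your explicit collapsing-union argument and the parity case split merely spell out details the paper leaves implicit.
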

\begin{proof}
In the case $d = 0$, Table~\ref{tab:num-rt} and Eq.~\eqref{eq:fib-seq} yield $\Rhat_0 = 2 = 2 F_3 - 2$.
Now assume $d \geq 1$. According to Lemma~\ref{le:inlcusion-plus-2} and
Lemma~\ref{le:even-odd-disjoint}, any real type which can be realized by a
polynomial up to degree $d$ can either be realized by a polynomial of degree
$d$ or by a polynomial of degree $d-1$. Thus, $\Rhat_d = R_d + R_{d-1}$.
Applying Theorem~\ref{thm:T-fib}
yields that this number is given by
\begin{equation*}
    \Rhat_d = R_d + R_{d-1} = 2 \cdot F_{d+2} + 2 \cdot F_{d+1} - 2 = 2 \cdot F_{d + 3} - 2.\qedhere
\end{equation*}
\end{proof}

Given Theorem~\ref{thm:T-fib} and Theorem~\ref{thm:real-types-up-to-d},
Binet's formula can be used to obtain a closed formula for the numbers $R_d$
and $\Rhat_d$.  This additionally allows to determine their precise
asymptotic growth.

\begin{proposition}[Binet's formula]\label{prop:fib-closed-form}
The Fibonacci numbers can be expressed in closed form as
\begin{equation*}
    F_{n} = \frac{1}{\sqrt{5}}\big(\varphi^{n} - {\varphi}^{-n}\big),
\end{equation*}
where $\varphi = \frac{1 + \sqrt{5}}{2} \approx 1.618$ is the golden ratio.\qed
\end{proposition}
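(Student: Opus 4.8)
The plan is to prove this classical identity directly, as it is short and self-contained (alternatively, since Binet's formula is entirely standard, one could simply cite a reference). The Fibonacci recurrence $F_n = F_{n-1} + F_{n-2}$ has characteristic polynomial $t^2 - t - 1$, whose roots are $\varphi = \frac{1+\sqrt5}{2}$ and $\psi = \frac{1-\sqrt5}{2}$. First I would record the elementary relations $\varphi + \psi = 1$, $\varphi\psi = -1$, and $\varphi - \psi = \sqrt5$. The second of these gives $\psi = -\varphi^{-1}$, hence $\psi^n = {(-1)}^n\varphi^{-n}$, so $\psi^n$ is exactly the $\varphi^{-n}$ term appearing in the proposition (carrying the parity factor ${(-1)}^n$).

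Next I would introduce the auxiliary sequence $G_n = \tfrac{1}{\sqrt5}(\varphi^n - \psi^n)$ and show $G_n = F_n$ for all $n \geq 1$ by strong induction. Since $\varphi$ and $\psi$ satisfy $t^2 = t + 1$, one has $\varphi^n = \varphi^{n-1} + \varphi^{n-2}$ and likewise for $\psi$ whenever $n \geq 3$; taking the appropriate $\R$-linear combination shows $G_n = G_{n-1} + G_{n-2}$, so $G$ obeys the Fibonacci recurrence. For the base cases, $G_1 = \frac{\varphi - \psi}{\sqrt5} = 1$ and $G_2 = \frac{(\varphi-\psi)(\varphi+\psi)}{\sqrt5} = 1$, matching $F_1 = F_2 = 1$. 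Induction then gives $G_n = F_n$ for all $n \geq 1$, and rewriting $\psi^n = {(-1)}^n\varphi^{-n}$ yields the asserted closed form.

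I expect no real obstacle here: this is a textbook computation, and the only step needing a moment of care is the identification $\psi = -\varphi^{-1}$, which is what lets us rewrite the symmetric expression in $\varphi$ and $\psi$ purely in terms of $\varphi$. Two equally quick alternatives would be to extract the coefficients of the generating function $\sum_{n \geq 1} F_n x^n = \frac{x}{1 - x - x^2}$ via a partial-fraction decomposition over the roots of $1 - x - x^2$, or to treat the statement as a black box and cite a standard reference, as is common when Binet's formula is invoked only to read off asymptotic growth (as it will be in Corollary~\ref{cor:fib-asymptotic}).
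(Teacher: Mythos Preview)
The paper gives no proof of this proposition at all: it is stated as a standard fact and closed with a \qed. Your argument is the textbook proof via the characteristic equation and is perfectly correct as a derivation of Binet's formula in its usual form $F_n=\tfrac{1}{\sqrt5}(\varphi^n-\psi^n)$ with $\psi=\tfrac{1-\sqrt5}{2}$.

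There is, however, a genuine discrepancy you gloss over. Since $\psi=-\varphi^{-1}$, one has $\psi^n=(-1)^n\varphi^{-n}$, so the correct closed form in $\varphi$ alone is
\[
F_n=\frac{1}{\sqrt5}\bigl(\varphi^n-(-1)^n\varphi^{-n}\bigr),
\]
not $\tfrac{1}{\sqrt5}(\varphi^n-\varphi^{-n})$ as printed in the proposition. (Check $n=1$: the paper's expression gives $(\varphi-\varphi^{-1})/\sqrt5=(\varphi+\psi)/\sqrt5=1/\sqrt5\neq 1=F_1$.) You clearly see this---your parenthetical ``carrying the parity factor $(-1)^n$'' says as much---but your final sentence then claims that the rewriting ``yields the asserted closed form'', which it does not; it yields the corrected formula with the $(-1)^n$. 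So your proof is right and the stated identity is wrong, and you should say so explicitly rather than absorb the mismatch silently. The slip is harmless for the only downstream use in Corollary~\ref{cor:fib-asymptotic}, since $|\varphi^{-n}|\to 0$ irrespective of sign.
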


\begin{corollary}\label{cor:fib-asymptotic}
Let $R_d$ be the number of real types realized by polynomials of exactly degree $d$, let $\Rhat_d$ be the number of real types realized by polynomials up to degree $d$, and let $\varphi$ denote the golden ratio. Then the following holds:
\begin{equation*}
R_d = \Theta(\varphi^d),\quad \Rhat_d = \Theta(\varphi^d).
\end{equation*}
\end{corollary}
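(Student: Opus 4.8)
The plan is to derive both asymptotic statements directly from the exact formulas already established, namely $R_d = 2F_{d+2}$ or $2F_{d+2}-2$ from Theorem~\ref{thm:T-fib} and $\Rhat_d = 2F_{d+3}-2$ from Theorem~\ref{thm:real-types-up-to-d}, combined with Binet's formula from Proposition~\ref{prop:fib-closed-form}. The whole argument reduces to understanding the growth of $F_n$, so the first step is to record that Binet's formula gives $F_n = \frac{1}{\sqrt5}(\varphi^n - \varphi^{-n})$ with $\varphi > 1$, hence $\varphi^{-n} \to 0$ and therefore $F_n = \Theta(\varphi^n)$; more precisely $F_n \sim \varphi^n/\sqrt5$, and one can pin down explicit constants, e.g.\ $\frac{1}{\sqrt5}\varphi^{n} - 1 \le F_n \le \frac{1}{\sqrt5}\varphi^{n}$ for all $n \ge 1$ since $0 \le \varphi^{-n} \le \sqrt5$ there.

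Next I would substitute. For $\Rhat_d = 2F_{d+3} - 2$: using $F_{d+3} = \Theta(\varphi^{d+3}) = \Theta(\varphi^d)$ (absorbing the constant factor $\varphi^3$ into the $\Theta$), the leading term $2F_{d+3}$ is $\Theta(\varphi^d)$, and subtracting the constant $2$ does not change the asymptotic class because $\varphi^d \to \infty$. Hence $\Rhat_d = \Theta(\varphi^d)$. For $R_d$, I would treat the two parities uniformly by noting $2F_{d+2} - 2 \le R_d \le 2F_{d+2}$ in all cases, so $R_d$ is squeezed between two quantities each of which is $\Theta(\varphi^{d+2}) = \Theta(\varphi^d)$; again the additive constant and the fixed multiplicative factor $2\varphi^2$ are irrelevant to the $\Theta$-class. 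This yields $R_d = \Theta(\varphi^d)$.

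There is no real obstacle here: the only thing to be slightly careful about is that $\Theta$ is being used as $d \to \infty$ over the integers (with the parity split for $R_d$ handled by the two-sided bound rather than by separate cases), and that constant shifts like $-2$ are harmless precisely because the dominant term is unbounded and bounded below by a positive multiple of $\varphi^d$ for large $d$. I would state the bounds explicitly once and then simply remark that both formulas fall between $c_1\varphi^d$ and $c_2\varphi^d$ for suitable positive constants $c_1, c_2$ and all sufficiently large $d$, which is exactly the claim.
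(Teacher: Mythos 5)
Your proposal is correct and follows essentially the same route as the paper: substitute the exact formulas from Theorem~\ref{thm:T-fib} and Theorem~\ref{thm:real-types-up-to-d} into Binet's formula and observe that $\varphi^{-n} \to 0$, so the constant offsets and fixed factors are absorbed into the $\Theta$-class; your version merely spells out the two-sided bounds that the paper leaves implicit. (One tiny caveat: the exact inequality $F_n \le \varphi^n/\sqrt{5}$ fails for odd $n$, since Binet's formula properly carries a sign $(-1)^n$ on the $\varphi^{-n}$ term, but this has no bearing on the asymptotic conclusion.)
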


\begin{proof}
This follows immediately from Theorem~\ref{thm:T-fib},
Theorem~\ref{thm:real-types-up-to-d} and
Proposition~\ref{prop:fib-closed-form} since $|\varphi^{-1}| < 1$, which
implies ${\varphi}^{-n} \to 0$ as $n \to \infty$.
\end{proof}

\section{Counting real types of families of polynomials}\label{sec:counting_real_types_of_families_of_polynomials}

We are interested in two possible questions regarding sign matrices of
families of real polynomials, where (\ref{item:a}) is a generalization of Ko\v
sta's result from Proposition~\ref{prop:kosta-Rdm} for a single polynomial:
\begin{enumerate}[label=(\alph*)]
\item\label{item:a} the number of sign matrices as a function of matrix dimensions and polynomial degrees;
\item the number of sign matrices as a function of matrix dimensions, for
arbitrary polynomial degrees.
\end{enumerate}

\subsection{Sign matrices depending on degrees}

Consider $n$, $m \in \N$ and degrees $d_1$, \dots,~$d_n \geq 1$. We want to
count the number $R_{d_1, \dots, d_n}^{(m)}$ of matrices in $\Signs^{n \times
(2m+1)}$ that establish a real $(d_1, \dots, d_n)$-types for a family of $n$
polynomials.

As a first step, we characterize these matrices as follows.

\begin{proposition}\label{prop:comb-sign-char}
A matrix in $\Signs^{n \times (2m+1)}$
is a real $(d_1, \dots, d_n)$-type if and only if the following hold:
\begin{enumerate}
\item Every even column contains at least one zero.
\item Every odd column does not contain any zero.
\item Row $i$ is a real $d_i$-type up to repeated entries.
\end{enumerate}
\end{proposition}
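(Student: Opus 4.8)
The plan is to prove both directions of the equivalence, treating the "only if" direction as essentially a recollection of facts already established in Section~\ref{sec:comb-sign-matrix}, and concentrating the real work on the "if" direction, i.e.\ the construction of a witnessing family of polynomials.

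For the forward direction, suppose $M \in \Signs^{n\times(2m+1)}$ is the real type of a family $f_1, \dots, f_n$ with $d_i = \deg f_i$ and combined root set $\{x_1 < \cdots < x_m\}$. Conditions~1 and~2 were already observed right after the definition of real types of families: an odd column corresponds to an open interval between consecutive combined roots (or one of the two unbounded intervals), where no $f_i$ vanishes, so the column has no zero entry; an even column corresponds to one of the points $x_j$, which by definition of the combined root set is a root of at least one $f_i$, so that column has at least one zero. For condition~3, the $j$-th combined root $x_j$ is a root of $f_i$ of some multiplicity $k_{i,j}\ge 0$ (allowing $k_{i,j}=0$ when $x_j$ is not a root of $f_i$); deleting from row $i$ the columns corresponding to those $x_j$ with $k_{i,j}=0$, together with one of the two flanking interval-columns for each such deletion, recovers exactly the real type of $f_i$ (the signs on the merged intervals agree because $f_i$ does not change sign at a non-root), so row $i$ is a real $d_i$-type up to repeated entries.

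For the converse, suppose $M$ satisfies conditions 1--3. By condition~3, each row $i$, after contracting repeated entries, is a real $d_i$-type, hence by Proposition~\ref{prop:z-and-r} (and Lemma~\ref{le:real-type-dep}) it is realized by a polynomial of the form $\sign(\alpha_i)\prod (x-\beta)^{e_{i,\beta}}$; more precisely, I would encode each row directly: for $j=1,\dots,m$, let $k_{i,j}\in\{0,1,2\}$ be determined by the entries of row $i$ in columns $2j-1$, $2j$, $2j+1$ — namely $k_{i,j}=1$ if column $2j$ has a zero in row $i$ and the odd neighbors differ in sign; $k_{i,j}=2$ if column $2j$ has a zero in row $i$ and the odd neighbors agree; and $k_{i,j}=0$ if column $2j$ has no zero in row $i$ (in which case the odd neighbors automatically agree, since row $i$ is a genuine real type up to repetition). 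Then set $g_i(x) = c_i \prod_{j=1}^m (x - x_j)^{k_{i,j}}$ with the constant $c_i$ a power of $(x^2+1)$ times $\pm 1$ chosen to fix both the sign of the leading coefficient and the total degree to be exactly $d_i$ — this is possible because Lemma~\ref{le:real-type-dep} and the parity arguments underlying Proposition~\ref{prop:kosta-Rdm} guarantee that $\sum_j k_{i,j}$ has the correct parity relative to $d_i$ and does not exceed $d_i$, and multiplying by $(x^2+1)$ raises the degree by $2$ without changing any sign. By Lemma~\ref{lem:kosta-mult} and Lemma~\ref{le:real-type-dep}, the real type of $g_i$ as a single polynomial, expanded by repetition across all $m$ combined points $x_1<\cdots<x_m$, is exactly row $i$ of $M$; condition~1 ensures every $x_j$ is a root of some $g_i$ so the combined root set is precisely $\{x_1,\dots,x_m\}$ and no spurious even column is root-free, and condition~2 is automatic from the $g_i$ having constant nonzero sign on each open interval. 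Hence $M$ is the real type of $g_1,\dots,g_n$.

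The main obstacle is the bookkeeping in the converse: one must check that the multiplicities $k_{i,j}$ read off from $M$ are jointly consistent with achieving the exact degrees $d_i$ — i.e.\ that $\sum_j k_{i,j}\le d_i$ and $\sum_j k_{i,j}\equiv d_i \pmod 2$ — and this is exactly the content guaranteeing that row $i$ is a real $d_i$-type, which is precisely what condition~3 asserts; so the subtlety is to invoke condition~3 at the level of the single-polynomial theory (via Lemma~\ref{le:real-type-dep} and the reasoning behind Proposition~\ref{prop:kosta-Rdm}) rather than re-deriving it. The common choice of the $x_j$ across all rows is harmless because $g_i$'s sign pattern depends only on the parities $k_{i,j}\bmod 2$ and the leading sign, not on the actual root locations.
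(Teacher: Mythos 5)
Your proof is correct and follows essentially the same route as the paper: the forward direction is read off from the definition, and the converse constructs witness polynomials with roots at the prescribed points and multiplicities determined by the sign pattern, which is exactly the construction from Proposition~\ref{prop:z-and-r} (via Lemma~\ref{le:real-type-dep}) that the paper invokes. You merely spell out the multiplicity and degree-parity bookkeeping (multiplicities in $\{0,1,2\}$, padding with $x^2+1$) that the paper leaves implicit.
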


\begin{proof}
Let $A \in \Signs^{n \times (2m+1)}$ be a real $(d_1, \dots, d_n)$-type. Then
all three properties follows directly from the definition of real $(d_1,
\dots, d_n)$-types in Section~\ref{sec:comb-sign-matrix}. Conversely, consider
 $A \in \Signs^{n \times (2m+1)}$ that satisfies the three properties.
According to property (3), each row $i$ of $A$ is the real type of a
polynomial $f_i$ up to repeated entries. Using the construction in the proof
of Proposition~\ref{prop:z-and-r}, we may assume without loss of generality
that $f_i$ has its real roots at the points $j \in \{1, \dots, 2m+1\}$ with
$A_{i,j} = 0$. It is not hard to see that the real type of the polynomials
$f_1$, \dots,~$f_n$ thus obtained is our matrix $A$. Hence, $A$ is a real $(d_1,
\dots, d_n)$-type.
\end{proof}

Since we know how to compute the number of real $d_i$-types, we can focus on
the distribution of zeros in sign matrices. To handle the condition that every
even column contains at least one zero, we use an inclusion-exclusion argument
in the sense of \citet[Theorem 10.1]{vanLint:92}:

\begin{proposition}[van Lint and Wilson, 1992]\label{prop:incl-excl}
\pushQED{\qed}
Let $S$ be a set of size $N$ and $E_1$, \dots,~$E_r$ not necessarily
distinct subsets of $S$. For any subset $M$ of $\{1, \dots , r\}$, 
we define $N(M)$ to be the number of elements of $S$ in 
$\bigcap_{i \in M} E_i$ and for $0 \leq j \leq r$, we define $N_j = \sum_{|M|=j} N(M)$.
Then the number of elements of $S$ not in any of the subsets $E_i$, $1 \leq i \leq r$, is 
\begin{equation*}
    N - N_1 + N_2 - N_3 + \cdots + {(-1)}^r N_r.\qedhere
\end{equation*}
\popQED
\end{proposition}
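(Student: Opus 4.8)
The plan is to prove this by the standard double-counting argument for inclusion–exclusion: I would reorganize the alternating sum $\sum_{j=0}^r (-1)^j N_j$ so that it is grouped not by the cardinality $j$ of the index set $M$, but by the element $x \in S$ being counted. To this end, for each $x \in S$ introduce its \emph{type} $T(x) = \{\, i \in \{1,\dots,r\} : x \in E_i \,\}$, the set of indices of the subsets containing $x$; the aim is to show that the alternating sum counts exactly the elements with $T(x) = \emptyset$. Repetitions among the $E_i$ cause no difficulty here, since $T(x)$ is a set of indices, and two distinct indices $i \neq j$ with $E_i = E_j$ simply lie in $T(x)$ together or not at all.

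Next I would unfold the definitions. Since $N(M) = \lvert \bigcap_{i \in M} E_i \rvert$ counts exactly those $x$ with $M \subseteq T(x)$ — using the convention that the empty intersection is all of $S$, so that $N_0 = N(\emptyset) = N$ — expanding each $N_j = \sum_{|M|=j} N(M)$ as a sum over elements and interchanging the two summations yields
\[
\sum_{j=0}^r (-1)^j N_j \;=\; \sum_{x \in S}\; \sum_{M \subseteq T(x)} (-1)^{|M|}.
\]
In this interchange each pair $(M,x)$ with $M \subseteq T(x)$ is counted exactly once, with sign $(-1)^{|M|}$, so nothing is double-counted.

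The computation is then finished by the binomial theorem applied to the inner sum over all subsets of the fixed finite set $T(x)$:
\[
\sum_{M \subseteq T(x)} (-1)^{|M|} \;=\; \sum_{j=0}^{|T(x)|} (-1)^j \binom{|T(x)|}{j} \;=\; (1-1)^{|T(x)|},
\]
which is $1$ when $T(x) = \emptyset$ and $0$ otherwise. Substituting back, the only surviving contributions come from the elements of $S$ lying in none of the $E_i$, and each such element contributes $1$; hence $\sum_{j=0}^r (-1)^j N_j$ equals the number of such elements, as claimed.

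This is a textbook fact, so I do not expect any genuine obstacle: the only points needing care are the bookkeeping in the interchange of summation and the boundary convention for $M = \emptyset$ (with the degenerate case $r = 0$ reducing the claim to $N = N$). One could alternatively just invoke \citet[Theorem~10.1]{vanLint:92} without proof, but the argument above is short enough to keep the paper self-contained.
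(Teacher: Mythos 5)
Your argument is correct: the double-counting reorganization by the type set $T(x)$, the convention $N_0 = N(\emptyset) = N$ for the empty intersection, and the collapse of the inner sum to $(1-1)^{|T(x)|}$ via the binomial theorem together give a complete and standard proof of inclusion--exclusion, and your remark that repeated sets $E_i = E_j$ cause no trouble (since $T(x)$ is a set of \emph{indices}) correctly addresses the only wrinkle in the statement. The paper itself, however, gives no proof at all: the proposition is quoted verbatim from van Lint and Wilson (Theorem~10.1) and marked as proved by reference, which is why the \verb|\qed| appears inside the statement. So your proof is not wrong or redundant so much as optional --- you supply the self-contained argument that the paper deliberately outsources. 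Either choice is defensible; your version costs half a page and removes an external dependency, while the paper's citation keeps the focus on the new counting results in Sections~3 and~4 where this principle is merely a tool.
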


In order to apply this proposition to real types, we define
the set $S$ as the set of all matrices in $\Signs^{n\times(2m+1)}$ such that
\begin{enumerate}
\item every odd column does not contain any zero;
\item row $i$ is a real $d_i$-type up to repeated entries.
\end{enumerate}
Furthermore, we define the subsets $E_{i}$ for $1 \leq i \leq m$ as the sets
of all elements in $S$ that satisfy the additional constraint
\begin{enumerate}
\item[3.] column $2i$ does not contain a zero.
\end{enumerate}
On these grounds, Proposition~\ref{prop:comb-sign-char} yields that the number
$R_{d_1,\dots, d_n}^{(m)}$ of real $(d_1, \dots, d_n)$-types with $m$ distinct
real roots is given by
\begin{equation*}
    R_{d_1,\dots, d_n}^{(m)} = |S \setminus (E_1 \cap \dots \cap E_m)|,
\end{equation*}
which can, in turn, be computed using the inclusion-exclusion formula from
Proposition~\ref{prop:incl-excl}. However, before we can apply the formula, we
first have to determine the numbers $N(M)$ of elements in the intersections
$\bigcap_{i \in M} E_i$ for subsets $M \subseteq \{1, \dots, m\}$.

\begin{lemma}\label{lem:N(M)}
Let $M \subseteq \{1, \dots, m\}$ be a possibly empty subset.
Then the number $N(M)$ of matrices in $S$ that contain no zero in column
$2j$ for all $j \in M$ is given by 
\begin{equation*}
    N(M) = \prod_{i=1}^n \sum_{k=0}^{m-|M|} \binom{m-|M|}{k} R_{d_i}^{(k)}.
\end{equation*}
\end{lemma}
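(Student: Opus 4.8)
The plan is to count the matrices in $S$ having no zero in any column $2j$ with $j \in M$ by reducing them, column-block by column-block and then row by row, to a choice of "zero pattern" for the remaining $m - |M|$ even columns together with a realisable real type for each row. First I would observe that a matrix $A \in \Signs^{n \times (2m+1)}$ lies in $S$ and has no zero in columns $2j$, $j \in M$, if and only if each row $i$ of $A$ is a real $d_i$-type up to repeated entries, every odd column is zero-free (automatic from row $i$ being a real type up to repetition: odd columns of a real type carry $\pm 1$), and the zeros of row $i$ occur only among the even columns indexed by $\{1, \dots, m\} \setminus M$. By Lemma~\ref{le:real-type-dep} together with the structural remarks in Section~\ref{sec:comb-sign-matrix}, the whole matrix is determined independently row by row, so $N(M) = \prod_{i=1}^n N_i(M)$ where $N_i(M)$ is the number of admissible $i$-th rows.

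The core step is then to compute $N_i(M)$. An admissible row $i$ is obtained from some genuine real $d_i$-type $[t_1, \dots, t_{2k+1}]$ realised by a polynomial with exactly $k$ distinct real roots — contributing $R_{d_i}^{(k)}$ choices by Proposition~\ref{prop:kosta-Rdm} — by distributing those $k$ roots among the $m - |M|$ available even-column positions in $\{1, \dots, m\} \setminus M$ and repeating the sign values on the intervening odd columns accordingly. Placing $k$ of the allowed "root slots" to be zero while keeping the relative order amounts to choosing a $k$-element subset of an $(m - |M|)$-element set, i.e.\ $\binom{m - |M|}{k}$ ways; once the set of zero columns is fixed, the constant sign on each maximal run of non-zero columns is forced by the chosen real $d_i$-type, and the odd columns are filled consistently. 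Distinct pairs (real $d_i$-type, placement) yield distinct rows because from a row one recovers both the positions of the zeros and, by deleting repeated non-zero runs, the underlying real $d_i$-type. Summing over $k$ from $0$ to $m - |M|$ gives $N_i(M) = \sum_{k=0}^{m-|M|} \binom{m-|M|}{k} R_{d_i}^{(k)}$, and taking the product over $i$ yields the claimed formula.

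The main obstacle is making rigorous the bijection between admissible rows and the pair (a real $d_i$-type with $k$ roots, a $k$-subset of the allowed even columns): one must check that "inserting repeated entries" at prescribed positions is well defined and invertible, that it never creates a zero in an odd column or outside the allowed even columns, and that it does not accidentally merge what should be distinct roots or split a single sign run — in short, that the reconstruction map (read off zero positions, collapse non-zero runs) is a genuine two-sided inverse. Care is also needed for the degenerate cases $M = \{1, \dots, m\}$ (forcing $k = 0$, so each row is constantly $+1$ or $-1$, consistent with $R_{d_i}^{(0)} \in \{0, 2\}$) and $M = \emptyset$ (recovering the full count of real $(d_1, \dots, d_n)$-types with at most $m$ roots distributed among all $m$ even columns). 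Once the bijection is in hand, the count is immediate.
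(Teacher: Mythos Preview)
Your proposal is correct and follows essentially the same approach as the paper: decompose the count row by row, and for each row $i$ stratify by the number $k$ of zeros in the allowed even columns, giving $\binom{m-|M|}{k}$ placements times $R_{d_i}^{(k)}$ real $d_i$-types. The paper's proof is much terser---it simply asserts the count without spelling out the bijection or the degenerate cases---whereas you make explicit the inverse map (read off zero positions, collapse non-zero runs) and the independence of rows; this extra care is sound but not a different idea.
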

\begin{proof}
Consider the row $i$ of a matrix in $S$ and assume it contains exactly 
$k$ zeros in the columns $\{2, 4, \dots, 2m\} \setminus 2M$.
Then there are $\binom{m-|M|}{k}$ possibilities for choosing these zeros,
and for each selection of zeros there are $R_{d_i}^{(k)}$ possibilities to choose the remaining entries 
from $\{-1, 1\}$. 
Thus, by summing over all possible values of $k$, we obtain
\begin{equation*}
    \sum_{k=0}^{m-|M|} \binom{m-|M|}{k} R_{d_i}^{(k)}
\end{equation*}
possibilities for row $i$. 
Since the entries in each of the $n$ rows do not depend on the entries of other rows, the total number 
of matrices is given by
\begin{equation*}
    N(M) = \prod_{i=1}^n \sum_{k=0}^{m-|M|} \binom{m-|M|}{k} R_{d_i}^{(k)}.\qedhere
\end{equation*}
\end{proof}

Using this lemma, we can apply the inclusion-exclusion formula to obtain a
formula for the number of real types of families of polynomials.

\begin{theorem}\label{thm:sign-matrix-formula}
The number $R_{d_1,\dots,d_n}^{(m)}$ of matrices in $\Signs^{n \times (2m+1)}$ that are
real $(d_1,\dots,d_n)$-types is given by
\begin{equation}\label{eq:fortythree}
    R_{d_1,\dots,d_n}^{(m)} = \sum_{i=0}^m {(-1)}^i \binom{m}{i} \prod_{j=1}^n \sum_{k=0}^{m-i} \binom{m-i}{k} R_{d_j}^{(k)}.
\end{equation}
\end{theorem}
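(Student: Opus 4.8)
The plan is to apply the inclusion--exclusion principle of Proposition~\ref{prop:incl-excl} directly to the set $S$ and the subsets $E_1, \dots, E_m$ defined above, using Lemma~\ref{lem:N(M)} as the input data. First I would recall that, by Proposition~\ref{prop:comb-sign-char} and the discussion following Proposition~\ref{prop:incl-excl}, a matrix in $\Signs^{n\times(2m+1)}$ is a real $(d_1,\dots,d_n)$-type if and only if it lies in $S$ but in none of the $E_i$; hence $R_{d_1,\dots,d_n}^{(m)} = |S \setminus (E_1 \cup \cdots \cup E_m)|$. (Note the formula in the text writes $E_1 \cap \cdots \cap E_m$, but the intended object is the complement of the union, which is what inclusion--exclusion computes; I would state this carefully.) Proposition~\ref{prop:incl-excl} then gives this count as $N - N_1 + N_2 - \cdots + (-1)^m N_m = \sum_{i=0}^m (-1)^i N_i$, where $N_i = \sum_{|M|=i} N(M)$.

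The key step is then to substitute the value of $N(M)$ supplied by Lemma~\ref{lem:N(M)}. Since $N(M) = \prod_{j=1}^n \sum_{k=0}^{m-|M|} \binom{m-|M|}{k} R_{d_j}^{(k)}$ depends on $M$ only through its cardinality $|M|$, every subset $M$ of size $i$ contributes the same value, and there are $\binom{m}{i}$ such subsets. Therefore $N_i = \binom{m}{i} \prod_{j=1}^n \sum_{k=0}^{m-i} \binom{m-i}{k} R_{d_j}^{(k)}$. Plugging this into $\sum_{i=0}^m (-1)^i N_i$ yields exactly Eq.~\eqref{eq:fortythree}, which completes the proof. This is essentially a bookkeeping assembly of the three earlier results, so I would keep it short.

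The main obstacle, such as it is, is conceptual rather than computational: one must be precise that the ``bad'' event to exclude is the union $E_1 \cup \cdots \cup E_m$ (``some even column $2i$ has no zero'') and not the intersection, and that Proposition~\ref{prop:incl-excl} as stated counts precisely the elements of $S$ avoiding all the $E_i$, i.e.\ $|S| - |E_1 \cup \cdots \cup E_m|$. I would make this explicit so the reader sees why the correspondence with Proposition~\ref{prop:comb-sign-char} is exact. A minor secondary point worth a sentence is that the case $m=0$ (no roots) is handled correctly by the formula, since then $S$ itself already consists of real $(d_1,\dots,d_n)$-types and the sum collapses to the single term $\prod_{j=1}^n R_{d_j}^{(0)}$. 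Beyond that, no new ideas are needed.
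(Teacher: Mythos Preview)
Your proposal is correct and follows essentially the same route as the paper: apply Proposition~\ref{prop:incl-excl} to $S$ and the $E_i$, invoke Lemma~\ref{lem:N(M)} to see that $N(M)$ depends only on $|M|$, multiply by $\binom{m}{i}$, and substitute into the alternating sum. Your remark that the displayed expression should read $S \setminus (E_1 \cup \cdots \cup E_m)$ rather than an intersection is a valid correction of a typo in the text.
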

\begin{proof}
Using the previous notation, Proposition~\ref{prop:comb-sign-char} and Proposition~\ref{prop:incl-excl} imply that
the number of real types is given by
\begin{equation}\label{eq:inc-exc-formula}
    R_{d_1,\dots,d_n}^{(m)} 
    = |S \setminus (E_1 \cap \dots \cap E_m)|
    = \sum_{i=0}^m {(-1)}^i N_i,
\end{equation}
where $N_i = \sum_{|M| = i} N(M)$ and $N(M)$ is the number of matrices in $S$
that for all $j \in M$ there occurs no zero in column $2j$.
By Lemma~\ref{lem:N(M)}, $N(M)$ depends only on $|M|$ and is given by
\begin{equation*}
    N(M) = \prod_{j=1}^n \sum_{k=0}^{m-|M|} \binom{m-|M|}{k} R_{d_j}^{(k)}.
\end{equation*}
Furthermore, there are $\binom{m}{i}$ possibilities of
choosing $M \subseteq \{1, \dots, m\}$ with $|M| = i$, which yields
\begin{equation*}
    N_i = \binom{m}{i} \prod_{j=1}^n \sum_{k=0}^{m-i} \binom{m-i}{k} R_{d_j}^{(k)}.
\end{equation*}
Substituting $N_i$ back into Eq.~\eqref{eq:inc-exc-formula} shows
that the number of real types is given by 
\begin{equation*}
    R_{d_1,\dots,d_n}^{(m)} = \sum_{i=0}^m {(-1)}^{i} \binom{m}{i} \prod_{j=1}^n \sum_{k=0}^{m-i} \binom{m-i}{k} R_{d_j}^{(k)}.\qedhere
\end{equation*}
\end{proof}

As with a single polynomial, we can eliminate the dependence on $m$ and count sign matrices with an arbitrary number of columns by summing over all possible numbers of roots.

\begin{corollary}\label{cor:Rd1dn}
The number $R_{d_1,\dots,d_n}$ of sign matrices that are real $(d_1,\dots,d_n)$-types is given by
\begin{equation}\label{eq:Rd1dn}
    R_{d_1,\dots,d_n} = \sum_{m = 0}^{d_1 + \dots + d_n} R^{(m)}_{d_1,\dots,d_n}.
\end{equation}
\end{corollary}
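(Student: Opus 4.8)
The plan is to partition the set of all real $(d_1,\dots,d_n)$-types according to the number $m$ of distinct combined real roots of a realizing family, in direct analogy with the single-polynomial identity Eq.~\eqref{eq:kosta23}.

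First I would note that the parameter $m$ is read off from the sign matrix itself: a real $(d_1,\dots,d_n)$-type is by definition a matrix in $\Signs^{n\times(2m+1)}$, so $m$ is determined as $(c-1)/2$ from its number $c$ of columns. Unwinding the definition from Section~\ref{sec:comb-sign-matrix}, if $f_1,\dots,f_n$ realize such a matrix then the set $V$ of their combined real roots has exactly $m$ elements. Hence the set of all real $(d_1,\dots,d_n)$-types is the union, over $m\ge 0$, of the sets of those realized by families with exactly $m$ distinct combined real roots, and this union is disjoint because matrices corresponding to different $m$ have different numbers of columns. Since the $m$-th set has cardinality $R^{(m)}_{d_1,\dots,d_n}$ by definition, this yields $R_{d_1,\dots,d_n} = \sum_{m\ge 0} R^{(m)}_{d_1,\dots,d_n}$.

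Second I would truncate the sum. Each $f_j$ has degree $d_j$ and therefore at most $d_j$ distinct real roots, so $|V|\le d_1+\dots+d_n$ for every realizing family; thus $R^{(m)}_{d_1,\dots,d_n}=0$ whenever $m > d_1+\dots+d_n$, and the sum collapses to the finite range $m=0,\dots,d_1+\dots+d_n$ stated in Eq.~\eqref{eq:Rd1dn}.

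I do not anticipate a genuine obstacle; this is the evident generalization of Eq.~\eqref{eq:kosta23}, and its proof uses nothing beyond the definitions and the degree bound. The single point worth a sentence of care is that $R^{(m)}_{d_1,\dots,d_n}$ counts sign \emph{matrices}, not realizing families, so one must confirm that distinct values of $m$ contribute pairwise-disjoint sets of matrices — which holds because the matrix dimensions already encode $m$, so no matrix is counted twice.
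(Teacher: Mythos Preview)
Your proposal is correct and matches the paper's own proof essentially verbatim: the paper argues that sign matrices of different dimensions are disjoint and that $n$ non-zero polynomials of degrees $d_1,\dots,d_n$ have at most $d_1+\dots+d_n$ combined real roots, which is precisely your two steps. Your write-up is simply more explicit about why $m$ is determined by the matrix and why the terms with $m>d_1+\dots+d_n$ vanish.
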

\begin{proof}
The statement follows directly from Theorem~\ref{thm:sign-matrix-formula}
since sets of sign matrices of different dimensions are disjoint, and $n$
non-zero polynomials of degrees $d_1$, \dots,~$d_n$ have at most $d_1 + \dots
+ d_n$ many roots.
\end{proof}

Additionally, we can regard the degrees $d_1$, \dots,~$d_n$ as only upper bounds
and obtain a generalization of Theorem~\ref{thm:real-types-up-to-d} for families of polynomials.

\begin{corollary}
The number $\Rhat_{d_1,\dots,d_n}$ of sign matrices realized by 
$n$ non-zero polynomials up to degrees $d_1$, \dots,~$d_n$ is given by
\[
    \Rhat_{d_1,\dots,d_n} = \sum_{\delta \in \{0, 1\}^n} R_{d_1 - \delta_1, \dots, d_n-\delta_n}.
\]
\end{corollary}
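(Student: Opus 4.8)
The plan is to mimic the proof of Theorem~\ref{thm:real-types-up-to-d}, generalizing the roles of Lemma~\ref{le:inlcusion-plus-2} and Lemma~\ref{le:even-odd-disjoint} coordinate-wise. Write $U$ for the set of sign matrices realized by some family $(g_1,\dots,g_n)$ of non-zero polynomials with $\deg g_i \le d_i$ for all $i$, and for $\delta \in \{0,1\}^n$ write $T_\delta$ for the set of real $(d_1-\delta_1,\dots,d_n-\delta_n)$-types, so that $|T_\delta| = R_{d_1-\delta_1,\dots,d_n-\delta_n}$. The goal is to show $U = \bigsqcup_{\delta \in \{0,1\}^n} T_\delta$ as a disjoint union; the claimed formula then follows by summing cardinalities.

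The inclusion $T_\delta \subseteq U$ is immediate, since a real $(d_1-\delta_1,\dots,d_n-\delta_n)$-type is realized by polynomials of degrees exactly $d_i - \delta_i \le d_i$. For $U \subseteq \bigcup_\delta T_\delta$, take $A \in U$ realized by $(g_1,\dots,g_n)$ and set $e_i := \deg g_i \le d_i$. Put $\delta_i := 0$ if $e_i \equiv d_i \pmod 2$ and $\delta_i := 1$ otherwise; in the second case $e_i \ne d_i$, hence $e_i \le d_i - 1$. In both cases $e_i \le d_i - \delta_i$ and $e_i \equiv d_i - \delta_i \pmod 2$, so replacing $g_i$ by $g_i \cdot {(x^2+1)}^{(d_i - \delta_i - e_i)/2}$ — which, exactly as in the proof of Lemma~\ref{le:inlcusion-plus-2}, changes neither the sign of $g_i$ at any point of $\R$ nor its set of real roots — yields a family realizing the same matrix $A$ with degrees precisely $d_i - \delta_i$. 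Hence $A \in T_\delta$.

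For disjointness, suppose $A \in \Signs^{n \times (2m+1)}$ lies in both $T_\delta$ and $T_{\delta'}$, realized by $(g_1,\dots,g_n)$ with $\deg g_i = d_i - \delta_i$ and by $(h_1,\dots,h_n)$ with $\deg h_i = d_i - \delta_i'$. Fix $i$. The first and last entries of row $i$ of $A$ are the signs of $g_i$ on $(-\infty, x_1)$ and on $(x_m, \infty)$, i.e.\ the signs of $g_i$ at $-\infty$ and $+\infty$; these agree with the first and last entries of the real type of $g_i$ in the sense of Section~\ref{sec:preliminaries}, and likewise for $h_i$. By Lemma~\ref{le:even-odd-disjoint} the parity of the degree of a polynomial is determined by whether the first and last entries of its real type coincide, so $\deg g_i \equiv \deg h_i \pmod 2$, that is $\delta_i = \delta_i'$. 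Since $i$ was arbitrary, $\delta = \delta'$, proving that the $T_\delta$ are pairwise disjoint. Combining the three observations gives $U = \bigsqcup_{\delta \in \{0,1\}^n} T_\delta$, whence $\Rhat_{d_1,\dots,d_n} = |U| = \sum_{\delta \in \{0,1\}^n} R_{d_1-\delta_1,\dots,d_n-\delta_n}$.

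I expect the disjointness step to be the main point: one has to notice that the two outermost columns of the matrix already pin down the degree parity of every polynomial in any realizing family via Lemma~\ref{le:even-odd-disjoint}, and it is this rigidity — not anything about the interior of the matrix — that forbids a single matrix from being a real $(e_1,\dots,e_n)$-type for two degree tuples of differing parities. A minor point to check is the degenerate case where some $d_i - \delta_i$ equals $0$ (non-zero constant polynomials), but both Lemma~\ref{le:even-odd-disjoint} and the ${(x^2+1)}$-multiplication argument apply there verbatim.
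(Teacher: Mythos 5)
Your proof is correct and takes essentially the same approach as the paper: the paper's own argument is a two-sentence sketch that reduces to $\Rhat_d = R_d + R_{d-1}$ per row via Lemma~\ref{le:inlcusion-plus-2} and Lemma~\ref{le:even-odd-disjoint} and then appeals to row-wise independence, which is exactly the disjoint decomposition $U = \bigsqcup_{\delta} T_\delta$ you carry out explicitly. Your detailed surjectivity step (padding by powers of $x^2+1$) and disjointness step (reading the degree parity of each row off the outermost columns) simply supply the details the paper leaves implicit.
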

\begin{proof}
As in the proof of Theorem~\ref{thm:sign-matrix-formula}, we have $\Rhat_d = R_d + R_{d-1}$ for a single polynomial. Since the rows of a sign matrices are given by $n$ independent real types of single polynomials, the statement follows.
\end{proof}
    
Note that the formulas for $R_{d_1, \dots, d_n}$ and $\Rhat_{d_1, \dots, d_n}$
in the previous two corollaries exclude the zero polynomial. It is possible to
adjust the formulas to cover the zero polynomial via summation over all
sublists of $(d_1, \dots, d_n)$.

\begin{example}\label{ex:hong}
    Hong discussed the following example as an application of his method of
    slope resultants \cite[Section 5]{Hong:93b}:
    \begin{equation*}
        \exists x(f_1 = 0 \land f_2 = 0 \land f_3 \leq 0),
    \end{equation*}
    where $f_1 = ux^2+vx+1$, $f_2 = vx^3+wx+u$, and $f_3 = wx^2+vx+u$. This
    asks for a Boolean combination of polynomial constraints in the parameters
    $u$, $v$, $w$ for the system $f_1=0$, $f_2=0$, $f_3 \geq 0$ to have a real
    solution. Given a choice of parameters, the answer is positive if and only
    if the corresponding sign matrix of the family $f_1$, $f_2$, $f_3$ with
    degrees $2, 3, 2$, respectively, contains a column $[0, 0, 0]^T$ or a
    column $[0, 0, 1]^T$. It follows from Lemma~\ref{le:real-type-dep} that
    only finitely many sign matrices need be considered. Those sign matrices
    can be enumerated via systematic case distinctions on the vanishing of
    parametric coefficients, possibly enumerating sign matrices multiple times
    \cite{Hormander:05a,Schoutens:04a}. We are interested in an upper bound on
    the number of sign matrices thus enumerated.

    The number of $R_{2, 3, 2}$ of real $(2, 3, 2)$-types can be computed via
    Eq.~\eqref{eq:Rd1dn} in Corollary~\ref{cor:Rd1dn}. With our application
    here, we enumerate all possible combinations of non-vanishing leading
    terms of $f_1$, $f_2$, $f_3$, where we assume for simplicity that all
    coefficients are parametric and algebraically independent. The following
    upper bound on the number of sign matrices enumerated takes into
    consideration multiple occurrences as discussed above:
    \begin{equation}
        \bar{R}_{d_1, d_2, d_3} =
            \sum_{e_1 = 0}^{d_1} \sum_{e_2 = 0}^{d_2} \sum_{e_3 = 0}^{d_3} R_{e_1, e_2, e_3}.
    \end{equation}
    Note that this equation assumes that none of the $f_1$, $f_2$, $f_3$
    vanishes altogether, which is owed to the general exclusion of zero
    polynomials throughout our work here. The possible vanishing of one or
    several of the $f_1$, $f_2$, $f_3$ is finally modeled by considering all
    sublists of $(d_1, d_2, d_3)$:
    \begin{equation}
        \bar{\bar{R}}_{d_1, d_2, d_3} = \sum_{e \subseteq (d_1, d_2, d_3)} \bar{R}_e.
    \end{equation}
    Using the Python code in Figure~\ref{fig:hong}, we quickly compute $R_{2, 3,
    2} = 26624$, $\bar{R}_{2, 3, 2} = 53736$, and our main result
    \begin{equation}\label{eq:hong_final}
        \bar{\bar{R}}_{2, 3, 2} = 55339 \approx 5 \cdot 10^4.
    \end{equation}
\end{example}

\begin{example}\label{ex:weispfen}
    Similarly to Example~\ref{ex:hong},
    Weispfenning introduced the following first-order formula as a benchmark for real
      quantifier elimination \cite[Section 5.5]{Weispfenning:97b}:
    \begin{equation*}
        \exists x\left(\bigwedge_{i=1}^2 f_i = 0 \land \bigwedge_{i=3}^5 f_i < 0\right),
        \quad
        f_i = a_i x^2 + b_i x + c_i.
    \end{equation*}
    Here, the answer is positive for choices of parameters $a_1$, \dots,~$c_5$
    for which the sign matrix of the family $f_1$, \dots,~$f_5$ contains a column
    $[0, 0, -1, -1, -1]^T$. Using our Python code in Figure~\ref{fig:hong} once again, we obtain:
        \begin{equation}\label{eq:weispfen_final}
        \bar{\bar{R}}_{2, 2, 2, 2, 2} = 311476091 \approx 3 \cdot 10^8.
    \end{equation}
    The large factor between the upper bounds on the numbers of relevant sign
    matrices in Eq.~\eqref{eq:weispfen_final} and Eq.~\eqref{eq:hong_final}
    supports the intuition that Weispfenning's quantifier elimination example
    is harder than Hong's.
\end{example}

\begin{figure}
    \lstinputlisting[language=python, commentstyle=\rmfamily\itshape,%
        basicstyle=\ttfamily\scriptsize, frame=single]{examples.py}
    \caption{Python code for the computations in Example~\ref{ex:hong}\label{fig:hong}}
\end{figure}

\subsection{Sign matrices independent of degrees}

We still consider $n$, $m \in \N$, but we are now interested in the number
$S_n^{(m)}$ of sign matrices in $\Signs^{n\times(2m+1)}$ that are real types
of $n$ polynomials with arbitrary degrees. In this setting, it turns out that,
the explicit form from the previous section simplifies to a closed form. In
particular, it generalizes the following result for all real types of length
$m + 1$ in the case of a single polynomial.

\begin{proposition}\label{prop:real-type-arbit-deg}
There are $2^{m+1}$ real types realized by single polynomials with exactly
$m$ distinct real roots and arbitrary degree.
\end{proposition}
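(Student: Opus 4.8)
The plan is to prove the proposition by identifying the set of real types realized by polynomials with exactly $m$ distinct real roots with a set that is obviously of size $2^{m+1}$. Concretely, I claim this set is precisely the set of sign sequences $s \in \Signs^{2m+1}$ whose zero entries are exactly those in even positions. Once this is established, the count is immediate: such a sequence takes an arbitrary value in $\{-1,1\}$ in each of the $m+1$ odd positions, independently, so there are exactly $2^{m+1}$ of them.

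First I would establish the easy inclusion. If $s = [s_1, \dots, s_{2m+1}]$ is the real type of a polynomial $f$ with exactly $m$ distinct real roots, then $f$ vanishes at each of the $m$ roots, forcing $s_{2j} = 0$ for $1 \leq j \leq m$, and $f$ does not vanish on any of the $m+1$ open intervals cut out by the roots, forcing $s_{2j-1} \in \{-1,1\}$ for $1 \leq j \leq m+1$. Hence every such real type lies in the described set, giving the upper bound $2^{m+1}$.

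The substantive step is the reverse inclusion: every sign sequence $s \in \Signs^{2m+1}$ with $s_{2j} = 0$ for $1 \leq j \leq m$ and $s_{2j-1} \in \{-1,1\}$ otherwise is realized by some polynomial with exactly $m$ distinct real roots. Given such an $s$, I would read off from consecutive odd entries the parities of the root multiplicities that a realizing polynomial must have: set $k_j = 1$ if $s_{2j-1} \neq s_{2j+1}$ and $k_j = 2$ if $s_{2j-1} = s_{2j+1}$, for $1 \leq j \leq m$, and put
\begin{equation*}
    f(x) = s_{2m+1} \cdot \prod_{j=1}^m {(x - j)}^{k_j} \in \R[x].
\end{equation*}
This $f$ has leading coefficient of sign $s_{2m+1}$ and exactly $m$ distinct real roots $1 < \dots < m$ with multiplicities $k_j$. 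By Lemma~\ref{le:real-type-dep} --- more precisely, by the explicit formula $s'_{2j+1} = \sign(\alpha)\cdot{(-1)}^{\sum_{i=j+1}^m k_i \Mod 2}$ derived in its proof --- the real type $s'$ of $f$ has $s'_{2j} = 0$ and, using ${(-1)}^{k_i} = s_{2i-1}s_{2i+1}$ and a telescoping product, $s'_{2j+1} = s_{2m+1}\prod_{i=j+1}^m s_{2i-1}s_{2i+1} = s_{2j+1}$. Thus $s' = s$, so $s$ is realized, and the two inclusions together give exactly $2^{m+1}$ real types. (The degenerate case $m = 0$ is covered, with $f$ a nonzero constant of sign $s_1$.)

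I expect the only genuine obstacle to be this last verification that the constructed $f$ realizes precisely the prescribed signs, which relies on the explicit sign formula from the proof of Lemma~\ref{le:real-type-dep} and the short telescoping computation above; everything else is bookkeeping. An alternative route --- summing Ko\v sta's $R_d^{(m)}$ over all degrees $d$ --- is less attractive, since a single real type may be realized by polynomials of several different degrees, so one would have to rule out overcounting; the direct bijection sidesteps this. Finally, it is worth recording the answer in the form $2^{m+1} = 2 \cdot 2^m$, i.e. two choices for the sign of the leading coefficient times $2^m$ choices for the parities of the $m$ root multiplicities, since this is the shape that will generalize to families of polynomials.
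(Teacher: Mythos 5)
Your proof is correct, but the substantive half --- the lower bound --- is argued quite differently from the paper. You construct, for each of the $2^{m+1}$ candidate sign sequences, an explicit realizing polynomial $s_{2m+1}\prod_j (x-j)^{k_j}$ with $k_j \in \{1,2\}$ read off from whether consecutive odd entries agree, and verify via the sign formula from Lemma~\ref{le:real-type-dep} and a telescoping product that it realizes exactly the prescribed sequence; this gives a self-contained bijection that does not depend on Ko\v sta's counting formula. The paper instead gets the matching lower bound purely by counting: it observes that the real types with $m$ roots realized in degree $2m$ and those realized in degree $2m+1$ are disjoint (Lemma~\ref{le:even-odd-disjoint}), so the total is at least $R_{2m}^{(m)} + R_{2m+1}^{(m)}$, which by Proposition~\ref{prop:kosta-Rdm} equals $2\sum_j \binom{m}{2j} + 2\sum_j\binom{m}{2j+1} = 2^{m+1}$. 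Note that the ``alternative route'' you dismissed because of overcounting is not quite what the paper does: it sums over only the two degrees $2m$ and $2m+1$, where the parity argument rules out double counting, and it only needs this as a lower bound meeting your upper bound. Your construction is longer but more informative (it exhibits witnesses and shows every such type is realizable in degree at most $2m$); the paper's is a two-line computation given the machinery already in place. Both are valid.
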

\begin{proof}
Let $s$ be a real type of a polynomial with $m$ distinct real roots. Then $s
\in \Signs^{2m+1}$ and $s_{2i} = 0$ for $1 \leq i \leq m$. Thus there are at
most $2^{m+1}$ possibilities for choosing $s_{2i+1} \in \{-1, 1\}$ for $0 \leq
i \leq m$. Conversely, the number of real types with $m$ roots is bounded from below
by $R_{2m+1}^{(m)} + R_{2m}^{(m)}$ since real types with even and odd degree
are disjoint by Lemma~\ref{le:even-odd-disjoint}.
Proposition~\ref{prop:kosta-Rdm} then implies
\begin{equation*}
    R_{2m+1}^{(m)} + R_{2m}^{(m)}
     = 2 \sum_{j \geq 0} \binom{m}{2j} + 2 \sum_{j \geq 0} \binom{m}{2j+1}\\
     = 2 \sum_{k \geq 0} \binom{m}{k} = 2 \cdot 2^m.
\end{equation*}
This agrees with the above upper bound such that the
number of real types is exactly $2^{m+1}$.
\end{proof}

Given our setting and using this proposition, we can simplify the right hand side of
Eq.~\eqref{eq:fortythree} in Theorem~\ref{thm:sign-matrix-formula}.

\begin{proposition}\label{prop:families-nice-formula}
The number $S_n^{(m)}$ of sign matrices in $\Signs^{n \times (2m+1)}$
which are real types of $n$ polynomials of arbitrary degree is given by 
\[
    S_n^{(m)} = 2^n \cdot {(3^n -1)}^m.
\]
\end{proposition}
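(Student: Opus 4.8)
The plan is to re-run the inclusion--exclusion argument behind Theorem~\ref{thm:sign-matrix-formula}, now with the degree constraints on the rows dropped. Concretely, I would let $S$ denote the set of matrices in $\Signs^{n\times(2m+1)}$ whose odd columns contain no zero and each of whose rows is, up to repeated entries, the real type of \emph{some} polynomial of arbitrary degree, and for $1 \le i \le m$ let $E_i \subseteq S$ consist of those matrices that additionally have no zero in column $2i$. By the evident analog of Proposition~\ref{prop:comb-sign-char} for arbitrary degrees, a matrix in $\Signs^{n\times(2m+1)}$ is a real type of $n$ polynomials of arbitrary degree exactly when it lies in $S\setminus(E_1\cap\dots\cap E_m)$, so $S_n^{(m)} = \lvert S\setminus(E_1\cap\dots\cap E_m)\rvert$, and Proposition~\ref{prop:incl-excl} applies once the numbers $N(M)$ for $M\subseteq\{1,\dots,m\}$ are determined.

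The next step mirrors Lemma~\ref{lem:N(M)}. Fix a row of a matrix in $S$ and suppose it has exactly $k$ zeros among the $m-\lvert M\rvert$ even columns not indexed by $M$; there are $\binom{m-\lvert M\rvert}{k}$ choices for the locations of those zeros, and for each the number of ways to fill in the remaining entries is the number of real types with $k$ roots and arbitrary degree, which is $2^{k+1}$ by Proposition~\ref{prop:real-type-arbit-deg}. Since the $n$ rows may be chosen independently, this gives
\[
  N(M) = \left(\sum_{k=0}^{m-\lvert M\rvert}\binom{m-\lvert M\rvert}{k}2^{k+1}\right)^{n},
\]
and the binomial theorem collapses the inner sum to $2\cdot 3^{m-\lvert M\rvert}$, so $N(M) = 2^n\,3^{n(m-\lvert M\rvert)}$ depends only on $\lvert M\rvert$.

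Finally, since there are $\binom{m}{i}$ subsets $M$ with $\lvert M\rvert = i$, we obtain $N_i = \binom{m}{i}\,2^n\,(3^n)^{m-i}$, and Proposition~\ref{prop:incl-excl} yields
\[
  S_n^{(m)} = \sum_{i=0}^m (-1)^i\binom{m}{i}\,2^n\,(3^n)^{m-i} = 2^n\,(3^n-1)^m,
\]
the last equality being one more application of the binomial theorem. I expect no genuine obstacle here: the only point requiring care is the bookkeeping in the modified $N(M)$ --- invoking Proposition~\ref{prop:real-type-arbit-deg} where $R_{d_j}^{(k)}$ stood before, and observing that passing from matrices with $m$ combined roots to rows with at most $m$ individual roots is unaffected by the absence of degree bounds --- after which the computation is entirely routine.
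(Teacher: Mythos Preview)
Your proposal is correct and follows essentially the same route as the paper: the paper also invokes the inclusion--exclusion structure of Lemma~\ref{lem:N(M)} and Theorem~\ref{thm:sign-matrix-formula}, replaces $R_{d_j}^{(k)}$ by $2^{k+1}$ via Proposition~\ref{prop:real-type-arbit-deg}, and then applies the binomial theorem twice. One small slip (which the paper shares): you want $S\setminus(E_1\cup\dots\cup E_m)$, not the intersection, since you need matrices lying in \emph{none} of the $E_i$; the ensuing computation is unaffected.
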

\begin{proof}
Proposition~\ref{prop:real-type-arbit-deg} and the proof of Lemma~\ref{lem:N(M)} and Theorem~\ref{thm:sign-matrix-formula} show that the number of
real types realized by polynomials with an arbitrary degree can be obtained by replacing the terms $R_{d_j}^{(k)}$ in 
\begin{equation*}
    \sum_{i=0}^m {(-1)}^i \binom{m}{i} \prod_{j=1}^n \sum_{k=0}^{m-i} \binom{m-i}{k} R_{d_j}^{(k)}
\end{equation*}
with the terms $2^{k+1}$. By the binomial theorem, we have
\begin{equation*}
    \sum_{k=0}^{m-i} \binom{m-i}{k} 2^{k+1}
    = 2 \sum_{k=0}^{m-i} \binom{m-i}{k} 2^{k}{(1)}^{(m-i)-k}
    = 2 \cdot {(2 + 1)}^{m-i}, 
\end{equation*}
which yields
\begin{equation*}
    \prod_{j=1}^n 2 \cdot {(2 + 1)}^{m-i} 
    = 2^n \cdot 3^{(m - i) \cdot n}.
\end{equation*}
Applying the binomial theorem once again shows that the number $S_n^{(m)}$ is given by
\begin{align*}
    S_n^{(m)} &= \sum_{i=0}^m {(-1)}^i \binom{m}{i} 2^n \cdot 3^{(m - i) \cdot n} \notag\\
    &= 2^n \sum_{i=0}^m \binom{m}{i} {(-1)}^i  {(3^{n})}^{m - i} \notag\\
    &= 2^n \cdot {(3^n - 1)}^m.\qedhere
\end{align*}
\end{proof}
This generalizes the obvious equation $S_1^{(m)} = 2^{m+1}$.

\section{Concluding remarks}\label{sec:concluding_remarks_and_further_work}
We have derived explicit or even closed formulas for numbers of real types of
single polynomials as well as finite families of polynomials, subject to
various parameterizations. In particular, we have provided an explicit formula
for the number of real types of families of $n$ polynomials with degrees
$d_1$, \dots,~$d_n$ and exactly $m$ distinct real roots. For the special case
$n = 1$ we obtain even closed forms, with a case distinction depending on the
parity of the degree, which refutes a conjecture in the literature.

It remains a challenging open question whether or not closed forms exist in
the general case.

\bibliographystyle{plainnat}
\bibliography{article}

\end{document}